\newtheorem{problemstatement}{Problem Statement}
\newtheorem{theorem}{Theorem}
\newtheorem{definition}{Definition}
\newtheorem{assume}{Assumption}
\newcommand{\set}[3]{{#1}_{#2}^{#3}}
\newcommand{\gtuple}{\langle V,\Sigma,\Gamma,E,\iota,F \rangle}
\newcommand{\A}{\mathsf{A}}
\newcommand{\lang}{\mathcal{L}}
\newcommand{\player}[1]{$P_{#1}$}
\title{Adapting to the Behavior of Environments with Bounded Memory\thanks{This material is based upon work supported by ARL ACC-APG-RTP W911NF1920333, AFRL FA9550-19-1-0169, DARPA D19AP00004 and NSF 1652113.}} 
\author{Dhananjay Raju
\institute{The University of Texas at Austin, USA}
\email{draju@cs.utexas.edu} 
\and 
R\"{u}diger Ehlers\institute{Clausthal University of Technology, Germany} \email{ruediger.ehlers@tu-clausthal.de}
\and 
Ufuk Topcu \institute{The University of Texas at Austin, USA} \email{utopcu@utexas.edu}}
\begin{document}
\maketitle
\begin{abstract}
We study the problem of synthesizing implementations from temporal logic specifications that need to work correctly in all environments that can be represented as transducers with a limited number of states.
This problem was originally  defined and studied by Kupferman, Lustig, Vardi, and Yannakakis. They provide NP and 2-EXPTIME lower and upper bounds (respectively) for the complexity of this problem, in the size of the transducer. We tighten the gap by providing a PSPACE lower bound, thereby showing that algorithms for solving this problem are unlikely to scale to large environment sizes.
This result is somewhat unfortunate as solving this problem enables tackling some high-level control problems in which an agent has to infer the environment behavior from observations. To address this observation, we study a modified synthesis problem in which the synthesized controller must gather information about the environment's behavior \emph{safely}. %We demonstrate that the modified synthesis problem remains helpful for applications in robotics. 
We show that the problem of determining whether the behavior of such an environment can be safely learned is only co-NP-complete. Furthermore, in such scenarios, the behavior of the environment can be learned using a Turing machine that requires at most polynomial space in the size of the environment's transducer.
\end{abstract}

\section{Introduction}

Reactive synthesis is the process of automatically computing correct (by construction) implementations of systems from their formal specifications~\cite{Bloem2018,EhlersKB15,DBLP:conf/tacas/MajumdarPS19}. A synthesized system is guaranteed to satisfy its specification along all of its executions, regardless of how the environment behaves. In this way, synthesis is much stronger than \emph{planning} (in deterministic domains), i.e., the process of finding one execution of a system satisfying the specification from its current state \cite{DBLP:journals/jair/PistoreV07}, as synthesis includes planning for all possible behaviors of the environment.
However, there are many cases in which reactive synthesis fails because there is no system that satisfies the specification against \emph{all} environment behaviors. This is for instance the case in scenarios in which the environment can block the system from achieving its objectives~\cite{Kress-Gazit2018May}. 
An example for such a case is depicted in Figure~\ref{fig:robotScenario}, where a human and a robot share a workspace.

\begin{figure}[h]
    \centering
    \begin{tikzpicture}[thick,auto,>=stealth,shorten >=3pt,scale = 0.7] 
    %\draw[step=1cm,color=gray] (0,0) grid (3,3);
    %\draw[fill = black] (1,1) rectangle (2,2);
    
    \path (-2,0) edge (0,0);
    \path (-2,0) edge (0,2);
    \path (-2,0) edge (0,-2);
    
    \path (2,0) edge (0,0);
    \path (2,2) edge (0,2);
    \path (2,-2) edge (0,-2);
    
    \path (2,0) edge (4,0);
    \path (2,2) edge (4,2);
    \path (2,-2) edge (4,-2);

    \path (6,0) edge (4,0);
    \path (6,0) edge (4,2);
    \path (6,0) edge (4,-2);
    
    \node at (2,3) {charging station 1};
    \node at (2,-.8) {charging station 2};
    \node at (2,-3.2) {charging station 3};
    
    \node at (0,2.5) {lane~1};
    \node at (0,.5) {lane~2};
    \node at (0,-2.5) {lane~3};
    
    \node at (4,2.5) {lane~1};
    \node at (4,.5) {lane~2};
    \node at (4,-2.5) {lane~3};
    
    \node at (-5,0) {initial robot location};
    \node at (9,0) {initial human location};
    
    \node at (-2,0) {\includegraphics[width=0.8cm,height=0.8cm]{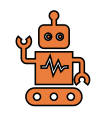}};
    \node at (2,0) {\includegraphics[width=0.8cm,height=0.8cm]{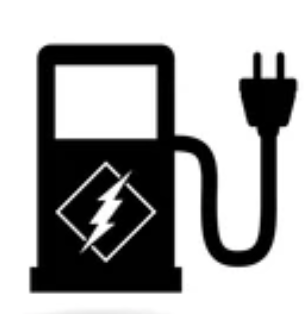}};
    \node at (2,-2) {\includegraphics[width=0.8cm,height=0.8cm]{battery.png}};
    \node at (2,2) {\includegraphics[width=0.8cm,height=0.8cm]{battery.png}};
    \node at (6,0) {\includegraphics[width=0.8cm,height=0.8cm]{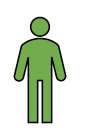}};
    \draw[fill = black] (0,0) circle (3pt);
    \draw[fill = black] (0,2) circle (3pt);
    \draw[fill = black] (0,-2) circle (3pt);
    \draw[fill = black] (4,0) circle (3pt);
    \draw[fill = black] (4,2) circle (3pt);
    \draw[fill = black] (4,-2) circle (3pt);
    
    %\node at (2.5,0.5) {\includegraphics[width=0.8cm,height=0.8cm]{battery.png}};
    \end{tikzpicture}
    \caption{A robot and a human are operating in the same environment. They take turns to move to an adjacent location. 
    The objective of the robotic agent is to use a charging station without colliding with the human. The objective of the human is unknown to the robot.}
    \label{fig:robotScenario}
\end{figure}

%\footnote{Note by Ruediger: I don't see a human and a robot in Figure 1. It is no clear what the graph represents. Can you make the workspace more physical? You don't want to give a game graph so early as we are synthesizing controllers and not solving games.} 

The robot controller to be synthesized has the task of evading the human while at the same time recharging whenever necessary by visiting a recharging station and staying there for two time steps in a row. After formalizing the scenario in the form of a specification, a reactive synthesis tool will conclude that there is no implementation, as the human can always move to the same lane as the robot. This human behavior requires the robot to back up as it is its task to evade. In this way, the robot can never use a recharging station.
While the answer that there is no controller is correct, a human engineer would typically write a controller by hand for this scenario that manages to recharge correctly whenever the human at least occasionally stays clear from the robot, for instance to pursue its own goals in the workspace. 
% Note by Ruediger: I've removed the following sentence part as we don't want to get into assumption about the environment behavior here - we make exactly the point that we need to avoid assumptions later.
% manages to perform collision-free patrolling by assuming that the human interferes with the robot only in some specific manner. 

To weaken the overly strong requirement that the synthesized controller has to always operate correctly no matter how the environment behaves, a common approach is to make specific \emph{assumptions} that restrict the environment's behavior. The synthesized system then has to operate correctly only in environments that satisfy these assumptions. However, this approach creates a new issue that the synthesized implementations are incentivized to work against the satisfaction of the assumptions (which can be partially mitigated \cite{DBLP:conf/atva/BloemEK15,DBLP:conf/tacas/MajumdarPS19}). Alternatively, the interaction between environment and system can be viewed from the perspective of strategic games, in which some form of stable equilibrium between the strategies of the environment and system players is computed such that none of the two players are incentivized to deviate \cite{DBLP:journals/acta/HunterPR17,DBLP:journals/acta/BrenguierRS17}. Both approaches require information about the environment's goals to perform this strategic reasoning.
Such information about the environment in which the system to be synthesized is supposed to operate in is unfortunately not always available. 

In the example from Figure~\ref{fig:robotScenario}, we are seeking a controller that avoids collisions with a human without knowing the human's intention in the shared workspace. A controller can do so by \emph{observing} the behavior of the human and adapting its control policy in a way that the human is avoided.
% Previously, the previous sentence started with "we reason that ...", which is a bit weak. The model needs to follow as a natural consequence of a few observations that we made.
Since environments can behave arbitrarily within their limits (in synthesis), they can also change their behavior arbitrarily and hence past observed behavior is useless in the setting of classical reactive synthesis.
This observation leads to the question if an alternative definition for the synthesis problem exists that would enable us to perform formal synthesis of a correct-by-construction controller in unknown environments. Some quantification about the environment's capabilities is necessary to make solving this problem useful, as otherwise the environment can behave fully antagonistically as in classical reactive synthesis. At the same time, we are seeking for a controller that always works correctly against simple behaviors of the environment.  % corresponds to the environment trying to pursue some goal, which is commonly the case when it includes an existing system as a component or in the example in Figure~\ref{fig:robotScenario} - a straight-forward case for humans and robots to share a workspace while not working together is that they work on different tasks.
This requirement can be formalized by 
starting from the observation that the size of the memory of a \emph{transducer} encoding the environment's behavior can be used as an abstract notion of the environment's behavioral complexity \cite{Papadimitriou2020}. We note that the need to bound the environment is of interest in several other paradigms in computer science. For example, in cryptography, one studies the security of a given crypto-system with respect to
attackers with bounded computational power \cite{Borodin1998May}. A symbolic synthesis procedure for bounded synthesis of lasso-precise implementations based on quantified Boolean formula solving has been provided in \cite{ryana_lassos}.

% TODO for Ruediger: Continue from here. The explanation why with few states, the robot cannot block the human should come next, but comes further below where it does not appear to belong.
In the  scenario from Figure~\ref{fig:robotScenario}, simple behavior of the human (environment) would enable the robot to perform its task, while very complex behavior such as blocking the robot by always moving to the same lane as the robot does not.
The human needs to use a state of memory for moving to a particular charging lane. Additionally, it needs one memory state to come back to the initial location.  
Say the human is using a transducer with three states, then the human can block the robot from charging in at most two lanes. However, with four states, the human can successfully block the robot from charging.
The \emph{synthesis under bounded environments} problem has originally been defined by  Kupferman, Lustig, Vardi and Yannakakis \cite{boundedtemp}, who also provide an algorithm for this synthesis problem that has a time complexity that is doubly-exponential in the number of states of finite-state machines (also known as \emph{transducers}) representing the environment. However, the lower bound on the complexity that they give is only NP, leaving a hope that this problem can be solved for scenarios of practical relevance, e.g., by employing a satisfiability (SAT) solver.

%\footnote{Note by Ruediger: I'm not a big fan of having this paragraph here. It distracts the reader to think about different problems, only to go back to our problem formulation, which can actually be missed by the reader. If we add more related work, this paragraph can do into an extra related work section. Otherwise, merging it into the conclusion may also be an option.}

In this paper, we tighten the gap between the upper and lower complexity bounds of synthesis under bounded environments and provide a new PSPACE lower bound, shattering the hope that the synthesis problem for bounded environments has a (relatively) low complexity. Our proof is based on the observation that in order to solve the problem, the synthesis algorithm needs to distill the safe ways of ``probing'' the environment in order to obtain information about its behavior, which causes the high complexity. 
We then prove that if the synthesized controller can observe the environment in a \emph{safe way}, the problem is simpler. This applies, for instance, to the robotics scenario from Figure~\ref{fig:robotScenario}. 
The robot can always move safely without restricting future behavior due to its past actions. However, for eventually recharging, the robot needs to use the observations made.
%The robot can always move safely without the specification of the robot restricting the future behavior due to its past actions. Only for eventually recharging, the robot needs to use the observations made. %Therefore, once the human moves to some charging lane, she cannot come back to the initial location. Consequently, the robot can come back to its initial position and choose a different lane. Thus, the human cannot block the robot from using a some charging station. 

To precisely capture such scenarios, we strengthen the original problem formulation by Kupferman, Lustig, Vardi and Yannakakis and  %\footnote{Note by Ruediger: Actually, we strengthen it.} 
introduce the notion of $k$-\emph{transducer liveness}. A synthesis problem instance is $k$-transducer live if from every prefix behavior that is compatible with at least one transducer of size $k$ (for the environment), there is a way for the system to continue operating such that it can eventually satisfy its objectives. Computing a controller for the system requires that no ``probing'' can make the system \emph{irrecoverable}, i.e., get the system into a situation from which it cannot satisfy its specification due to its prior actions.
The example in Figure~\ref{fig:robotScenario} is $3$-transducer live, i.e., when the human corresponds to a transducer with three states, the robot can move so as to figure out which two (out of three) charging stations the human may block.
The significance of this new notion is that focusing on $k$-transducer liveness reduces the complexity of the synthesis problem to co-NP-completeness. 
% Note: The terms "strategy", "game" and so on can't come here if we talk about the synthesis problem
With the new definition, we are condensing the problem of finding out if it is possible for the system to gradually adapt to the environment's behavior to a more manageable complexity class. While the complexity is still beyond polynomial time, we can employ efficient SAT solvers after encoding the specification to a synthesis game to determine if some simple environment behavior can block the synthesized system from satisfying the specification.

The synthesized controllers in our approach iterate through possible transducers for the environment behavior and use own behavior adapted to a particular environment transducer until the environment is found to react inconsistently with the supposed transducer. Whenever this is found to be the case, the controller switches to the next possible own behavior. 
% Removed: "We show that" because this is an informal proof.
Such a strategy can be implemented on a Turing machine that uses at most polynomial space (in the size of the environment's transducer). %\footnote{....in what?}
We leave the problem of computing controller implementations that adapt to the environment as quickly as possible for future work.

In the context of learning the behavior of a bounded environment, the synthesis for \emph{absolute liveness} properties -- properties that are insensitive to additions of prefixes, was shown to be contained in EXPTIME~\cite{boundedtemp}. There are games that are $k$-transducer live but not absolutely live.
%D's answer: Following the original definition of absolute liveness in Alpern, the definition requires that there is a string that satisfies the property. Furthermore to any string that satisfies the property, any prefix can be added and it should still satisfy the property. As per this definition, if the game is absolutely live, then the game is also k-transducer live. 
For example, in Figure~\ref{fig:robotScenario}, the game is not absolutely live but is $3$-transducer live.  
Additionally, if the game is absolutely live and is winning for the system player against $k$-transducers, then the game is also $k$-transducer live. Thus, absolute liveness is a 
% removed: "strictly" as a scenario can be absolutely live but not winning for the system
stronger assumption when compared to $k$-transducer liveness for the purpose of safely probing bounded environments. 

The paper is structured as follows. In Section~\ref{sec:prelims}, we provide preliminaries that define two-player games used for reactive synthesis, transducers and transducer languages. In Section~\ref{sec:lower}, we prove a PSPACE lower bound of synthesis for bounded environments by encoding a quantified Boolean formula in a B\"uchi game against a $k$-transducer environment. In Section~\ref{sec:live}, we introduce the notion of $k$-transducer liveness. We show that identifying whether a game is $k$-transducer live is co-NP-complete. 
The co-NP-containment proof is constructive and shows how to compute a strategy to win such games.
% Removed the following sentence as the problem is that the actual transducer may not be learned, but rather a different one that cannot be distinguished by the respective current exploitation strategy of the system. It's hard to find a replacement sentence that is concise enough.
% Additionally, we show that in any $k$-transducer live reachability game, the system can learn the environment's transducer in at most exponential (in $k$) steps. 
Lastly, we conclude in Section~\ref{sec:conclude}.

%.... (NOTE: this paragraph makes sense here as the reader will want to see again that we have two contributions in this paper - if then can be found in this paragraph, this understanding is reinforced.)

% ===============================================

\section{Preliminaries}
\label{sec:prelims}

Let $\Sigma$ and $\Gamma$ be finite alphabets. Furthermore, let $\A =  \Sigma\Gamma$.
% NOT NEEDED AT THE MOMENT: For a word $w = a_0b_0a_1b_1\dots \in \A^*$ ($\A^\omega$), the \emph{projection} of $w$ onto $\Gamma$, denoted by $\Pi_\Gamma$, is defined as $ \Pi_\Gamma(w) = b_0b_1\dots$, and the projection onto $\Sigma$, denoted by $\Pi_\Sigma$, is defined in a similar fashion. 

\begin{definition}[Parity Game]
\label{def:bgame}
A game $G$ between two players \player{1} and \player{2} is a tuple $\langle V,\Sigma,\Gamma,E,\iota,F \rangle$, where
\begin{itemize}
    \item $V = V_1 \uplus V_2$ is the set of vertices (or positions). $V_1$ is the set of \player{1} vertices and $V_2$ is set of \player{2} vertices.
    \item $\Sigma$ and $\Gamma$ are the action sets of \player{1} and \player{2}, respectively.
    \item $E: (\{V_1 \times \Sigma \} \cup \{V_2 \times \Gamma\}) \to V$ is the transition %(partial) 
    function, where
    \begin{align*}
    &E(u,a) = E_1(u,a), \text{ if } u \in V_1 \text{ and } a \in \Sigma \text{ and } \\
    &E(v,b) = E_2(v,b), \text{ if } v \in V_2 \text{ and } b \in \Gamma.    
    \end{align*}
    Here, $E_1 : V_1 \times \Sigma \to V_2$ and $E_2: V_2 \times \Gamma \to V_1 $ are %partial
    functions corresponding to transitions from $P_1$ and $P_2$ vertices, respectively. 
    \item $\iota \in V_1$ is the initial vertex.
    \item $F : V \to \mathbb{N}$ is a coloring function.
\end{itemize}
%graph $(V,E)$ whose vertices are partitioned into vertices $V_1$ of player \player{1} and $V_2$ of player \player{2}. The moves alternate between the vertices of player~\player{1} and player~\player{2}, i.e., $E \subseteq (V_1 \times V_2) \cup (V_2 \times V_1)$.
%The edges of the graph are labeled by a function $\lambda: E \to \Sigma \cup \Gamma$  such that if $e_1 = (u,v_1) \in E$, $e_2 = (u,v_2) \in E$, and $\lambda(e_1) = \lambda(e_2)$, then $v_1 = v_2$.  The action sets of \player{1} and \player{2} are $\Sigma$ and $\Gamma$, respectively, i.e., if $e \in V_1 \times V_2$, then $\lambda(e) \in \Sigma$ and if $e \in V_2 \times V_1$, then $\lambda(e) \in \Gamma$.
% The following has been removed as it's not correct anymore. We would have to reason about plays to fix it, but plays are only defined below. The presentation is likely to be more straight if we just leave this out.
% In each vertex in $V_1$, the first player (\player{1}) chooses an outgoing transition and in each vertex in $V_2$, the second player (\player{2}) chooses an outgoing transition. The question is if from a given initial vertex $\iota$, \player{2}  has a strategy to visit a vertex in $F$ infinitely often.
\end{definition}
%We refer to a game with a B\"uchi objective as a B\"uchi game. 
A \emph{play} $\rho$ is a (possibly infinite) sequence $u_0v_0u_1v_1\dots$ of vertices such that $u_0 = \iota$ and there is a sequence of actions $w = a_0b_0a_1b_1 \dots$ such that $E(u_i,a_i) = v_i$ and $E(v_i,b_i) = u_{i+1}$ ($i \in \mathbb{N})$. Moreover, we say that the play $\rho$ is \emph{generated} by the word $w$.  A play $\rho = u_0v_0u_1v_1\dots$ is winning for player~2 if and only if the largest number occurring infinitely often in the sequence $F(u_0)\,F(v_0)\, F(u_1) \ldots$ is even. A \emph{B\"uchi} game is a variant of the parity game such that $F : V \to \{1,2\}$.
%Starting from the initial vertex $\iota$, a (infinite) \emph{play} proceeds with the owner of the current vertex choosing a vertex $v'$ among its successors in the transition function $E$. Thus, the players collaboratively build a play, consisting of a potentially infinite sequence of vertices of the game.
%The play $\rho = u_0v_0u_1v_1\dots$ ($u_0 = \iota)$ is \emph{generated} by the word $w = a_0b_0a_1b_1\dots \in \A^*$ if and only if for all $i$, we have $E(u_i,a_i) = v_i$ and $E(v_i,b_i) = u_{i+1}$
%.\footnote{TODO: Check if we need finite/infinite words and plays. Here it's not well-defined if a play is finite or infinite. Also the ``for all i'' is currently not well-defined. The same for the sentences below.}
% The following definition is not needed (and become nonsensical with $E_1$ and $E_2$ being total functions.
% We say that a word $w$ is \emph{compatible} with the game $G$ if $w$ generates a valid play on the underlying graph. 
In a B\"uchi game, a play $\rho$ is winning for $P_2$ if some vertex $v$ such that $F(v) = 2$ is repeated infinitely often.
Throughout this paper, all plays that are not winning for \player{2} are winning for \player{1}.
Lastly, \emph{reachability} games are a variant of Büchi games. For them, a play $\rho$ is winning for $P_2$ if it eventually reaches a vertex $v$ such that $F(v) = 2$.
%Additionally, \emph{parity} games are a variant of Büchi games in which the $F$ component of the game is a function from $V$ to $\mathbb{N}$.

% A \emph{strategy} $\sigma$ for a player $P \in \{P_1,P_2\}$ maps every finite play $\rho =v_0v_1\dots v$ that ends in some $v \in V_P$ to one of its actions, denoted by $\sigma(\rho)$. 
A strategy $\sigma$ for a player $P \in \{P_1,P_2\}$ maps every finite prefix sequence of actions $w \in \A^* \cup \A^* \Sigma$ ending with a action for the respective other player to a next action of $P$ (where for \player{1}, $\sigma$ also maps the empty word to an element of $\Sigma$).
A word $w \in \A^* \cup \A^\omega$ is said to \emph{agree} with a strategy~$\sigma$ for \player{1} if the play $\rho$ generated by $w$ agrees with $\sigma$. A strategy $\sigma$ for player $P$ is said to be winning if every play $\rho$ (starting at $\iota$) that agrees with $\sigma$ is winning for player $P$.
%\footnote{You didn't say when a play is winning. Above in the definition, it is written when a strategy is winning, but here this is redefined based on plays...for which this is open.}

\iffalse
\begin{assume}
In any game $G = \gtuple$,  we assume without loss of generality that the transition functions $E_1$ and $E_2$ are total (i.e., not partial). 
\end{assume}
\noindent The above assumption can be achieved for the partial function $E_1$ without changing result of the game by the following modifications. We a single vertex $v' \in V_2$, making $v'$ a winning vertex for \player{2} and the edges below. 
In the original graph, if an edge with for a particular label $a \in \Sigma$ does not exist for a vertex $v \in V_1$, then a new transition with the label $a$ that leads to $v'$ is added. In similar manner, the game can be modified for such that $E_2$ is a function.
\fi
%\footnote{Is this really necessary for the story? It would be easier to just define games to have total transition functions and be done with it. If you write that without loss of generality, games can be brought into this form, you need to add what the aim of doing so is. It is nowhere stated that this does not change who wins the game.}

We model finite-state reactive systems with inputs in $\Gamma$ and outputs in $\Sigma$ by \emph{transducers}. We use these transducers to model bounded memory environments.
% Removed: "Transducers are similar to B\"{u}chi automata."
% They are not really similar...no accepting states and their purpose is completely different.
% However, they have an associated labeling function (for each state).
\begin{definition}[Transducer]
\label{def:trans}
A finite transducer $T$ is a tuple $\langle \Sigma, \Gamma, M,s, L,\eta \rangle$ that consists of the following components:
\begin{itemize}
    \item $M$ is a finite set of states,
    \item $\Sigma$ is a set of alphabet symbols, called the output alphabet,
    \item $\Gamma$ is a set of alphabet symbols, called the input alphabet,
    \item $s \in M$ is the initial state,
    \item $\eta: M \times \Gamma \to M$ is a function, called the transition function and
    \item $L: M \to \Sigma$ is a function, called the labeling function.
\end{itemize}
\noindent We extend $\eta$ to words in $\Gamma^*$ in the straight-forward way. Thus, $\eta: \Gamma^* \to M$ is such that $\eta(\epsilon) = s$ and for $x \in \Gamma^*$  and $i \in \Gamma$, $\eta(x\cdot i) = \eta(\eta(x),i)$. We define the labeling function on words in $\Gamma^*$ , $\widehat{L}:~\Gamma^* \to \Sigma$, as $\widehat{L} = L \circ \eta$.
\end{definition}

Each transducer $T$ induces a \emph{strategy} $f_T: \Gamma^* \to \Sigma$, where for all $w = a_0 b_0 a_1 b_1 \ldots a_n b_n\in (\Sigma \times \Gamma)^*$, we have that $f_T(w) = \widehat{L}(b_0 b_1 \ldots b_n)$. Thus, $f_T(w)$ is the action that $T$ outputs after reading the \player{2} actions in $w$.
A transducer with $k$ states is called a $k$-\emph{transducer}. 
Furthermore, a strategy induced by a $k$-transducer is called a \emph{$k$-transducer strategy}.

In the subsequent sections, we analyze the behavior of \player{1} when it is restricted to using $k$-transducer strategies. For this purpose, we define compatibility of plays with respect to some $k$-transducer strategy for \player{1} as follows.

% TODO for Ruediger: The next definition can be simplified
\begin{definition}[Agreement with a $k$-transducer]
\label{def:agree}
% Old: A word $w$ in $\A^*$ $($or$)$ $\A^\omega$ is said to agree with a $k$-transducer $T=\langle \Sigma, \Gamma, M,s, L,\eta \rangle$, if for every prefix $w_1 \cdot \alpha$ of $w$, $\widehat{L}\circ\Pi_{\Gamma}(w_1) = \alpha$ (where $\alpha \in \Sigma$).
A word $w = a_0 b_0 a_1 b_1 \ldots \in \A^* \cup \A^* \Sigma \cup  A^\omega$ is said  to agree with a $k$-transducer $T=\langle \Sigma, \Gamma, M,s, L,\eta \rangle$ if for every prefix $a_0 b_0 \ldots a_n b_n a_{n+1}$ of $w$, we have that $\widehat{L}(b_0 b_1 \ldots b_{n})=a_{n+1}$.
\end{definition}
\noindent We define $\A^{*}_k$ ($\A^\omega_k$) to be the set of words $w \in \A^{*} (\A^\omega)$ that agree with some $k$-transducer for~\player{1}. 

\begin{definition}[$k$-transducer language]
\label{def:klang}
We define the $k$-transducer language for a reachability, B\"{u}chi, or parity game $G$, denoted by $\lang_k(G)$, to be the set of words %$w \in \A^*$
$\A^\omega$ that agree with some $k$-transducer $T$ for \player{1} and for which the play generated by $w$ is winning for \player{2}.%\footnote{I don't get how this makes sense for a finite word. Either the induced play is finite, and then it can't be winning. Or you talk about the set of all words that start in a way induced by the word. But then there are many such plays, so talking about ``the play'' makes no sene.}
\end{definition}

Games of infinite duration, as in Definition~\ref{def:bgame}, are a conceptual model to reduce \emph{reactive synthesis} to the task of finding out for a given specification in some temporal logic such as linear temporal logic (LTL) whether there exists a transducer whose executions all satisfy the given specification, without the possibility to control the input to the transducer~\cite{Bloem2018}. 
The specification is translated to an automaton over infinite words, which is in turn translated to a game of infinite duration such that the implementations that satisfy the specification are exactly the \emph{system player} strategies in such games. Games with this property are also called \emph{synthesis games} for the respective specification. We will use Büchi games for hardness proofs and parity games for complexity class containment proofs in this paper. The former are a special case of parity games, so that hardness results carry over to the parity game case. % Parity games in turn are expressive enough for full $\omega$-regular expressivity and hence a standard model for reactive synthesis.

% While the Büchi games considered in this paper are not suitable for arbitrary omega-regular specifications, this poses no problem for hardness proofs.
%\footnote{TODO: Revisit whether the co-NP-containment result later can be casted as a parity game problem}. 
% 

% Old version...
\iffalse
%In this paper, we study the problem of controller synthesis for a system when the environment is modeled by a $k$-transducer. 
We model the interaction between the system and the environment as a % REMOVED because not defined: "deterministic"
B\"uchi game $G$ between two players \player{1} and \player{2}.
Without loss of generality, we assume that the environment corresponds to player~1 in the game. 
The \emph{specification} for the system is provided as the set $F$ of final states for \player{2} in the game $G$. A winning strategy for \player{2} (in $G$) against \player{1} restricted to $k$-transducer strategies corresponds to a controller for the system that works in all environments that can be modeled by $k$-transducers.\footnote{This is only very vaguely connected to the synthesis problem that Kupferman et al. meant -- I will rewrite this to make the connection to reactive synthesis clear..}
\fi

%\vspace{5pt}
\begin{problemstatement}
Given a reachability, Büchi, or parity game $G = \langle V,\Sigma,\Gamma,E,\iota,F \rangle$, does the system (\player{2}) have a winning strategy in the game $G$ against an $k$-transducer environment (\player{1})?
\end{problemstatement}

We note that despite stating our results based on a formalization of the reactive synthesis problem using games in this paper, the environment model definition is the same as the one by Kupferman et al.~\cite{boundedtemp}. Therefore,  hardness and containment results in the size of the environment transducers are valid for the reactive synthesis in bounded environments problem as well.

%==========================================
\section{General B\"{u}chi games against bounded adversaries}
\label{sec:lower}

We address \emph{reactive synthesis for bounded environments} by studying the problem from Problem Statement~1, which reformulates this variant of reactive synthesis in the scope of games.
% We analyze the problem of determining the existence of a winning strategy for the system against environments modeled by
% bounded-memory transducers. 
%In contrast to classical game solving and reactive synthesis, the system (player) gains some ways to potentially satisfy its objectives as the environment behavior is restricted in this setting. 
The current lower bound for the complexity of this problem is NP-hard~ \cite{boundedtemp}. We improve this lower bound by showing that the problem is at least PSPACE-hard in this section.
% However, finding a winning strategy is computationally hard ($\leq$ 2EXPTIME). 

To show the PSPACE-hardness of this problem, we encode a quantified Boolean formula~(QBF) formula $\psi$ of the form 
$$\psi =  \forall x_1 \exists y_1 \forall x_2 \exists y_2 \dots \forall x_k \exists y_k:~(C_1 \land C_2 \land \dots \land C_r)$$ 
into a reachability game $G_\psi$ on a graph of size $O(k\cdot (r+1))$ that is winning for the system player (\player{2}) if and only if $\psi$ is valid (equivalent to $\mathbf{true}$). 
The environment player (\player{1}) can use only $(k+1)$-transducer strategies ($k \in \mathbb{N}$), while the system player has no such restrictions.
In the above QBF, $C_1, \ldots, C_r$ are the \emph{clauses}, i.e., disjunctions of literals in $\{x_i,\neg x_i, y_i, \neg y_i :1 \leq i \leq k\}$.
\begin{figure}[hptb]
\begin{tikzpicture}[->,>=stealth',shorten >=1pt,auto,node distance=2.5cm, semithick,scale = 0.6]
    
\tikzstyle{playeronenode}=[shape=rectangle,minimum width=0.8cm,minimum height=0.8cm,inner sep=-40pt,draw]
\tikzstyle{playertwonode}=[shape=diamond,minimum width=0.9cm,minimum height=0.9cm,inner sep=-40pt,draw]
    
%playing e
\node[playeronenode,initial,fill=cyan] (start) {$ \iota$};
\node[playertwonode,right of = start] (enode) {$n$};
\path (start) edge node {$e$} (enode);
\node[inner sep = 0.2cm,fit = (start)(enode),draw] (E) {};
\node[above right,blue] at (E.north west) {Phase~1};

%Assign
\node[playeronenode, below of = start,fill = cyan] (A1) {$a_1$};
\path (enode) edge node [anchor = south] {$*$} (A1); 
\node[playertwonode,right of = A1] (A2) {$a_2$};
\node[right of = A2] (A3) {\quad $\dots$};
\node[playeronenode,right of = A3] (A4) {$a_{2k-1}$};
\node[playertwonode,right of = A4] (A5) {$a_{2k}$}; 
\path (A1) edge node {$x_1,\lnot x_1$} (A2);
\path (A2) edge node {$y_1,\lnot y_1$} (A3);
\path (A4) edge node {$x_k, \lnot x_k$} (A5);

\node[inner sep = 0.2cm,fit = (A1)(A2)(A3)(A4)(A5),draw] (A) {};
\node[above right,blue] at (A.north west) {Phase~2};

%Phase~3, clause 1
\node[playeronenode, below of = A1,fill = cyan] (x11) {$v_{1,1}^\bot$};
\node[below of = x11] (x11d) {};
\node [playertwonode, right of = x11] (y11f) {$v_{2,1}^\bot$};
\node [playertwonode, below of = y11f] (y11t)    
{$v_{2,1}^\top$};
\node [playeronenode, right of = y11f] (x21f) {$v_{3,1}^\bot$};
\node [playeronenode, below of = x21f] (x21t) {$v_{3,1}^\top$};
\node [playertwonode, right of = x21f] (y21f) {$v_{4,1}^\bot$};
\node [playertwonode, below of = y21f] (y21t) {$v_{4,1}^\top$};

%\node [right of = y21t] (c1d) {};

\path (A5.south) edge node [anchor = south] {$y_k,\lnot y_k$} (x11.north east);
\path (x11) edge node {$x_1$} (y11f);
\path (x11) edge node {$\lnot~x_1$} (y11t);
\path (y11f) edge node {$\lnot y_1$} (x21f);
\path (y11f) edge node {$y_1$} (x21t);
\path (y11t) edge node {$y_1, \lnot y_1$} (x21t);
\path (x21f) edge node {$x_2$} (y21f);
\path (x21f) edge node {$\lnot x_2$} (y21t);
\path (x21t) edge node {$x_2,\lnot x_2$} (y21t);

\node[inner sep = 0.2cm,fit = (x11)(y11f)(y11t)(x21f)(x21t)(y21f)(y21t),draw,loosely dashed] (C1) {};

%phase~3, clause 2

\node[playeronenode, below of = y21t,fill = cyan] (x12) {$v_{1,2}^\bot$};
\node[below of = x12] (x12d) {};
\node [playertwonode, left of = x12] (y12f) {$v_{2,2}^\bot$};
%\node [playertwonode, below of = y12f,dashed] (y12t) {$v_{2,2}^\top$};
\node [playeronenode, left of = y12f] (x22f) {$v_{3,2}^\bot$};
\node [playeronenode, below of = x22f] (x22t) {$v_{3,2}^\top$};
\node [playertwonode, left of = x22f] (y22f) {$v_{4,2}^\bot$};
\node [playertwonode, below of = y22f] (y22t) {$v_{4,2}^\top$};

\node[inner sep = 0.2cm,fit = (x12)(y12f)(x22f)(x22t)(y22f)(y22t),draw,loosely dashed] (C2) {};

\path (y21t) edge node [anchor = east] {$y_2,\lnot y_2$} (x12);
%\path (c1d) edge (x12.north east);

\path (x12) edge node {$x_1,\lnot x_1$} (y12f);
\path (y12f) edge node {$y_1$} (x22f);
\path (y12f) edge node {$\lnot y_1$} (x22t);
\path (x22t) edge node {$x_2,\lnot x_2$} (y22t);
\path (x22f) edge node {$x_2$} (y22t);
\path (x22f) edge node {$\lnot x_2$} (y22f);

%P1 sink
\node[playertwonode, right of = x12, fill = orange] (P21) {};
\node[playeronenode, right of = P21, fill = orange] (P22) {};
\node[inner sep = 0.4cm,fit = (P21)(P22),draw,loosely dashed] (P2) {};
\node[above left] at (P2.north east) {$P_1$ paradise};
\path (P21) edge [bend left] node {$*$} (P22);
\path (P22) edge [bend left] node {$*$} (P21);

%P2 sink
\node[playeronenode, below of = P21, fill = green] (P11) {};
\node[playertwonode, below of = P22, fill = green] (P12) {};
\node[inner sep = 0.4cm,fit = (P11)(P12),draw,loosely dashed] (P1) {};
\path (P11) edge [bend left] node {$*$} (P12);
\path (P12) edge [bend left] node {$*$} (P11);
\node[below left] at (P1.south east) {$P_2$ paradise};
%clause 3, phase 3

\node[playeronenode, below of = y22t,fill = cyan] (x13) {$v_{1,3}^\bot$};
\node [playertwonode, right of = x13] (y13f) {$v_{2,3}^\bot$};
\node [playertwonode, below of = y13f] (y13t)    
{$v_{2,3}^\top$};
\node [playeronenode, right of = y13f] (x23f) {$v_{3,3}^\bot$};
\node [playeronenode, below of = x23f] (x23t) {$v_{3,3}^\top$};
\node [playertwonode, right of = x23f] (y23f) {$v_{4,3}^\bot$};
\node [playertwonode, below of = y23f] (y23t) {$v_{4,3}^\top$};

\node[inner sep = 0.2cm,fit = (x13)(y13f)(y13t)(x23f)(x23t)(y23f)(y23t),draw,loosely dashed] (C3) {};

\path (y22t) edge node [anchor = west] {$y_2,\lnot y_2$} (x13);
\path (x13) edge node {$x_1$} (y13t);
\path (x13) edge node {$\lnot x_1$} (y13f);
\path (y13t) edge node {$y_1,\lnot y_1$} (x23t);
\path (y13f) edge node {$y_1$} (x23f);
\path (y13f) edge node {$\lnot y_1$} (x23t);
\path (x23f) edge node {$x_2,\lnot x_2$} (y23f);
\path (x23t) edge node {$x_2,\lnot x_2$} (y23t);

\path (y23t) edge node [anchor = west] {$y_2, \lnot y_2$} (P11);
\path (y23f) edge node [anchor = west] {$y_2$} (P11);

\path (C3.north) edge [red,dashed,thick] node {$e$} (P21);
\path (C2) edge [red,thick,dashed] node {$e$} (P21);
\path (C1.south east) edge [red,dashed,thick] node {$e$} (P21);

\node[inner sep = 0.2cm,fit = (C1)(C2)(C3),draw] (phase3) {};
\node[above right,blue] at (phase3.north west) {Phase~3};

\node[inner sep = 0.09cm,fit = (P1)(P2),draw] (phase4) {};
\node[right,blue] at (phase4.east) {\rotatebox{90}{Phase~4}};
\end{tikzpicture}

\caption{The Reachability game corresponding to the quantified Boolean formula (QBF) $\psi = \forall x_1 \exists y_1 \forall x_2 \exists y_2:(\neg x_1 \vee y_1 \vee \neg x_2) \wedge (\neg y_1 \vee x_2) \wedge (x_1 \vee \neg y_1 \vee y_2)$.  
\player{1} plays from squared vertices and \player{2} plays from diamond-shaped vertices. 
% Removed the following sentence: the number of memory states of player 1 is not visible in the figure.
% In the figure, $k = 2$. 
The \emph{paradises} for the two players are positions from which they (corresponding player) is guaranteed to win a play. 
The dashed edges labeled by the exit action $e$ correspond to edges that start at every \player{1} controlled vertex in phase~3 to the \player{1} paradise, except that there is no such transition from $v_{1,1}^\bot$. Lastly, if for some action of a player, an outgoing edge is not shown, we assume that this action leads to the \emph{paradise} of its opponent.
}
\label{fig:winf}
\end{figure}
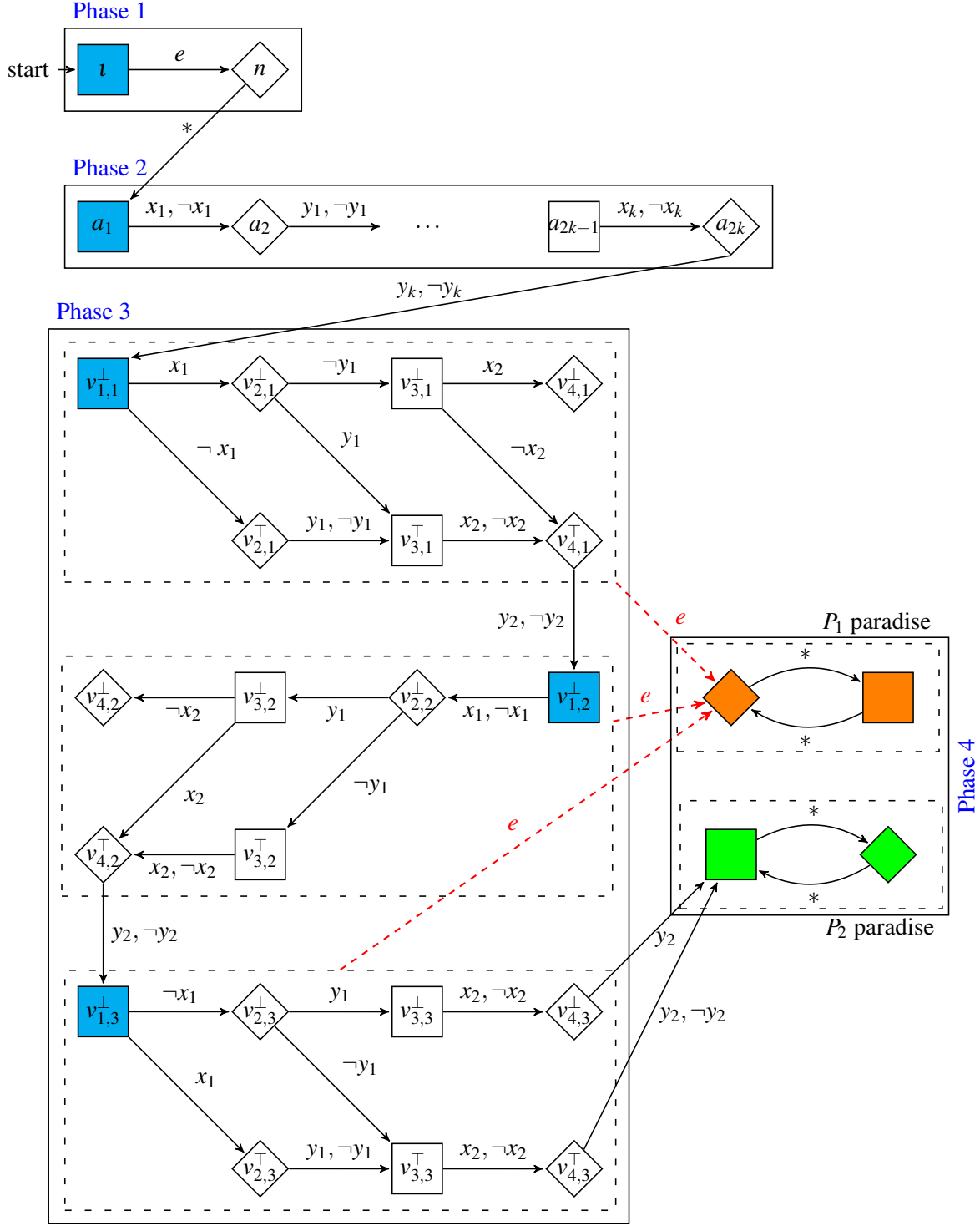
We depict the structure of the game for a specific QBF instance in Figure~\ref{fig:winf}.

In the game, \player{1} chooses its actions from the set $\{x_i, \neg x_i : 1 \leq i \leq k\} \cup \{e\}$, while \player{2} uses $\{y_i, \neg y_i : 1 \leq i \leq k\}$ as action set.
Intuitively, \player{1} and \player{2} make assignments to the $x$-variables and $y$-variables, respectively, using their
$\neg x_i/x_i$ and $\neg y_i/y_i$ actions.
\player{1} is trying to satisfy the formula, while \player{2} is trying to falsify the formula. Additionally, \player{1} has the possibility of playing an \emph{exit} move ($e$), which ends the assignment process.
The game is played in four phases. 
\begin{enumerate}
    \item In the first phase, \player{1} demonstrates that it can play the $e$ action.
    \item In the second phase, \player{1} and \player{2} jointly construct an assignment.
    \item In the third phase, the satisfaction of each of the clauses with respect to the assignment is checked.
    \item In the final phase, the play ends in a \emph{paradise} for one of the players.
\end{enumerate}
Additionally, if in the first two phases, one of the players plays an illegal action (from any state moves that are not shown in the figure), then the play moves to a paradise (for the opponent) immediately.

%In the second stage, the two players together build an assignment of the variables, while iterating through all clauses. there the first clause is considered twice. Finally, every play ends in a \emph{paradise} for one of the players. 
The most interesting phase of the game is the third one, which is played in $r$ stages, with  stage $1 \leq j \leq r$ corresponding to clause $C_{j}$. In this phase, the players have to make the assignments for all their variables turn by turn, once for each clause. Thus, for every clause they each have to play $k$ rounds. 

For every $1 \leq j \leq r$ and $1 \leq i \leq 2k$ such that $i$ is odd, the vertex $v_{i,j}^\top$  represents the fact that \player{1} is currently choosing the value of variable $x_{(i+1)/2}$ in the clause $C_j$ and that the variable values chosen so far already satisfy the clause. Similarly, the vertex $v_{i,j}^\bot$ represents the fact that \player{1} is currently choosing the value of variable $x_{(i+1)/2}$ in the clause $C_j$ and the variable values chosen so far do not already satisfy the clause. From even $i$, \player{2} is choosing the value for variable $y_{i/2}$.
A special case is the $e$ action of player~\player{1}, which immediately leads to a paradise for one of the two players. If \player{1} plays it during phase~2, then the play reaches a paradise for \player{2} and hence, \player{2} wins the game. If \player{1} plays $e$ during phase~3, then \player{1} wins the game (by reaching a paradise).

The transducer for \player{1} has $k+1$ states. 
In every transducer that wins the game (for \player{1}), the initial state $s = m_0$ is reserved for generating the $e$ action. The other $k$ states are needed to make an assignment to the $x$-variables during the second phase of the game. Without loss of generality, we denote the transducer state used for assigning to variable $x_i$ by $m_i$.
%This is because the actions required by \player{1} to make a assignment for positions in the game are all different for each variable $x_i$.
This is because the actions required by \player{1} to make an assignment to the variable $x_i$ are unique for each $i$, i.e., the other actions (including $e$) are illegal. Therefore, all the $k+1$ states of the transducer have to be used by \player{1} for the game to enter phase~3. Lastly, the only legal actions for \player{1} from $v_{1,1}^{\bot}$ are $x_1$ and $\lnot x_1$. In particular, the exit action $e$ is not allowed. Since \player{1} is forced to use unique memory states to make assignments for the $x$-variables in phase~1 and the number of memory states ($k+1$) available for it is exactly equal to the number of different actions required to complete the first two phases, at the start of phase~3, the transducer for \player{1} hits the loop $m_1 \rightarrow m_2 \dots \rightarrow m_k \rightarrow m_1$ on the (memory) states. If it does not hit this loop on its memory states, then it (\player{1}) would have made an illegal action causing it to lose.
However, in the vertices in phase~3, \player{1} wins the game if it plays the $e$ action.

%Also, it has to start with the value of the first variable after the second variable. This means that after the part of the game for the first clause, the transducer for \player{1} needs to have entered a loop in order not to lose.
At this point, if \player{2} continues playing the same assignment (that it had generated in phase~2), the loop on the environment's (\player{1}) transducer will be continued. However, any deviation from the initial assignment by \player{2} could trigger a return to the transducer state $m_0$ that outputs $e$, thereby making \player{2} lose the game. Hence, \player{2} is forced to make the same assignments. 
% I've removed the following sentence, as it does not capture the interaction between the two players when making an assignment to the QBF variables -- it is however also not needed as in the beginning of the proof, the same argument is made as well.
% If \player{2} can make an assignment (for $y$-variables) such that no matter what assignment \player{1} makes all clauses are satisfied, then \player{2} can win the reachability game. Otherwise, \player{2} loses as some clause will be violated.

\begin{theorem}
The QBF formula $\psi$ is valid if and only if \player{2} has a winning strategy against any $(k+1)$-transducer for \player{2} in the reachability game constructed from $\psi$. 
\label{lemma:winf_pspace}
\end{theorem}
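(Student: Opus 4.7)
The plan is to prove both directions of the biconditional by leveraging a single structural observation about $(k+1)$-transducers for \player{1}: phases~1 and~2 together force \player{1} to emit the $k+1$ pairwise distinct outputs $e$ and $x_1/\lnot x_1,\ldots,x_k/\lnot x_k$, so each of the $k+1$ transducer states is used exactly once and has a fixed output. In particular, the unique state that emits $x_1/\lnot x_1$ (call it $m_1$) is where the transducer sits at the start of phase~3, and from there the transducer either follows the cycle $m_1 \to m_2 \to \cdots \to m_k \to m_1$ (reproducing the same $x$-assignment in every clause check) or is driven into the initial state $m_0$; any visit to $m_0$ at a \player{1}-vertex in phase~3 other than $v_{1,1}^\bot$ makes \player{1} emit $e$ and take the dashed edge into \player{1}'s paradise.

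For the ``if'' direction, I assume $\psi$ is valid and fix Skolem functions $f_i:\{0,1\}^i \to \{0,1\}$. I define \player{2}'s strategy $\sigma$ to play, at every $y_i$-decision in phase~2 or phase~3, the literal dictated by $f_i$ evaluated on the prefix $(x_1,\dots,x_i)$ of $x$-values read off \player{1}'s moves. Against any $(k+1)$-transducer $T$ for \player{1}, either $T$ emits an illegal action in phase~1 or~2 (sending the play to \player{2}'s paradise immediately) or $T$ survives phase~2. In the latter case the structural observation pins $T$ to replaying the same tuple $x^{\ast}\in\{0,1\}^k$ in every round of phase~3, so that each clause check is effectively an evaluation of $\phi(x^{\ast},f(x^{\ast}))$, which is true by validity of $\psi$; hence each clause check ends at $v_{2k,j}^\top$ and the last one routes the play into \player{2}'s paradise.

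For the ``only if'' direction I contrapose and, given any strategy $\sigma$ for \player{2}, build a $(k+1)$-transducer $T_\sigma$ that defeats it. Extract from $\sigma$ its phase-2 response function $g_i:\{0,1\}^i \to \{0,1\}$ describing the $i$-th $y$-decision after observing $x_1,\dots,x_i$ from \player{1}. Invalidity of $\psi$ supplies $x^{\ast\ast}\in\{0,1\}^k$ with $\phi(x^{\ast\ast},g(x^{\ast\ast}))$ false; letting $y_i^{\ast\ast}=g_i(x_1^{\ast\ast},\dots,x_i^{\ast\ast})$, I define $T_\sigma$ on states $m_0,\ldots,m_k$ with $L(m_0)=e$ and $L(m_i)$ equal to the literal selecting $x_i^{\ast\ast}$, together with transitions $\eta(m_0,\cdot)=m_1$, $\eta(m_i,y_i^{\ast\ast})=m_{(i\bmod k)+1}$, and $\eta(m_i,\overline{y_i^{\ast\ast}})=m_0$ for $1\le i\le k$. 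Then $T_\sigma$ clears phase~2 against $\sigma$ by construction; in phase~3, any deviation of $\sigma$ from $y^{\ast\ast}$ sends the transducer to $m_0$ and hence the play to \player{1}'s paradise via an $e$-edge, while non-deviation replays the falsifying assignment $(x^{\ast\ast},y^{\ast\ast})$ through every clause, forcing the play into some $v_{2k,j}^\bot$ from which \player{2} cannot escape \player{1}'s paradise.

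The main obstacle I expect is verifying that $T_\sigma$ performs both of its jobs simultaneously: it must not sabotage itself in phase~2 against the actual moves of $\sigma$ (which is precisely why $y^{\ast\ast}$ is pinned to $g(x^{\ast\ast})$ rather than being an arbitrary commitment chosen in advance), and it must punish every possible phase-3 deviation of $\sigma$, including deviations restricted to a single clause check. The asymmetric ``no $e$-edge from $v_{1,1}^\bot$'' provision likewise needs care, since it is what prevents \player{1} from winning trivially at the very start of phase~3 and what ensures the transducer truly must replay $x^{\ast\ast}$ in every clause check.
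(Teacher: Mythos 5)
Your proof is correct and follows essentially the same route as the paper: the same pigeonhole argument forcing all $k{+}1$ transducer states to be consumed by the distinct outputs of phases~1--2 (hence the forced cycle $m_1 \to \cdots \to m_k \to m_1$ in phase~3), a Skolem-function strategy for \player{2} in the validity direction, and a strategy-dependent counter-transducer that replays a falsifying $x$-assignment and exits via $e$ on any $y$-deviation in the invalidity direction. Your explicit transition table for $T_\sigma$ and your attention to the missing $e$-edge at $v_{1,1}^\bot$ are merely a more detailed rendering of the argument the paper gives informally.
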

\begin{proof}
\noindent ($\Rightarrow$) 
If the formula $\psi$ is valid, then for each $1 \leq i \leq k$, values for $y_i$ can be chosen based only on the values of $x_0, y_0, \ldots, x_{i-1}$ such that after assigning values to all Boolean variables in $\psi$, all clauses are satisfied.
We observe that this gives rise to a strategy for \player{2} to win the game built from $\psi$. When~\player{2} plays this strategy, at the end of the clause $C_j$ component of the game in phase~3, vertex $v_{1,j+1}^\bot$ is reached. 
For \player{1}, the only way to then avoid eventually reaching the \player{2} paradise is by playing an $e$ move. Since the transducer only has $k+1$ states, after $k+1$ decisions, the transducer has to visit an old state again, in particular the state $m_1$ giving the value for $x_1$. Thus, if \player{2} repeats the same actions as before, the transducer (for the environment) is then forced to repeat its actions as well, preventing it from ever using the exit action $e$.
 
\noindent ($\Leftarrow$) 
 For the other direction, we show that if the QBF formula is not valid, then for each strategy of \player{2}, there is a counter-strategy for \player{1} that lets \player{1} win. The strategy for \player{1} is defined as follows:
 \begin{itemize}
 \item In phase 1, \player{1} initially plays $e$ from the transducer state $m_0$.
 \item In phase 2, \player{1} plays the values of $x_i$ from state $m_i$ (for $1 \leq i \leq k$) suitable to eventually falsify some QBF clause for the choices of $x_1,y_1, \ldots, x_{i-1}, y_{i-1}$ made by \player{1} and \player{2} for  $x_1, \ldots, y_{i-1}$ thus far. 
By the assumption that the QBF is not valid, such a choice is guaranteed to exist for every $1 \leq i \leq k$.
 \item Finally, in phase 3, \player{1} repeats the values of  $x_1, \ldots, x_k$ indefinitely unless  \player{2} chooses values other than $y_1, \ldots, y_k$, in which case it plays the exit move $e$ by transitioning to $m_0$. 
 \end{itemize}
 Note that here, the choice of the strategy for \player{1} depends on the strategy for \player{2}. Since we are only asking if there exists a strategy for \player{2}, this is a valid line of reasoning. By following the above \player{1} strategy, the play  reaches a \player{1} paradise, either due to some clause being violated or due to \player{1} using the exit action action  $e$ in the third phase (since that indicates that the values of $y_1, \ldots, y_k$ have changed).
 \end{proof}

% Previously, the following paragraph was a bit redundant. Yet, some transition to the next results is needed. Hence I rewrote it somewhat.
The PSPACE lower bound for the synthesis under bounded environments problem shows that from a computational complexity point of view, the problem is somewhat difficult. 
In the next section, we provide a method to reduce the complexity by strengthening the synthesis problem. We do so by requiring that the system is able to gather information about the environment's transducer in \emph{safe ways}.

% =========================================

\section{$k$-transducer liveness}
\label{sec:live}

In the literature, a linear-time temporal (LTL) property $\psi$  is said to be \emph{live} if and only if for all partial computations $\alpha$, there is a (possibly infinite) sequence of states $\beta$ such that $\alpha \beta \models \psi$, i.e., no partial execution is irremediable: it always remains possible for the required `good thing' to occur in the future~\cite{alpern1985oct}. 
% Previous this said that \psi is encoded as a game. But \psi is a linear-time property. Without stating that this is for
% the purpose of synthesis, this made no sense.

When employing synthesis games that encode the reactive synthesis problem and restricting the environment to behavior implementable as $k$-transducers, we can rephrase liveness for such environment transducers on the level of games as follows:

\begin{definition}[$k$-transducer liveness]
\label{def:klive}
A game $G$ is $k$-transducer live if $$\forall \alpha \in \A^*_k :  \exists \beta \in \A^{\omega}:~ \alpha \beta \in \mathcal{L}_k(G).$$
\end{definition}
For safely observing the behavior of the environment, we require that no finite play that is generated by a $k$-transducer environment can cause a system failure irrespective of the choice of moves made by the system so far. 
The above definition of liveness ensures that any finite play agreeing with some $k$-transducer strategy can always be extended to a play that is winning for the system~(\player{2}).
For example, the scenario in Figure~\ref{fig:robotScenario} is $3$-transducer live.

We call a reactive synthesis problem instance \emph{$k$-transducer live} if the synthesis games encoding the problem instance are $k$-transducer live. Since $k$-transducer liveness is given on the level of action sequences, either all synthesis games for a specification have this property, or none of them have.
Next, we analyze the complexity of determining if a given game is $k$-transducer live.

\begin{theorem}
\label{thm:coNPContainment}
Deciding whether a parity game $G$ is $k$-transducer live is contained in co-NP.
\end{theorem}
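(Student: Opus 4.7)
The plan is to establish co-NP containment by showing that the complementary problem, non-$k$-transducer liveness, lies in NP. The first step is to bound the length of a non-liveness witness: if some prefix $\alpha \in \A^{*}_k$ has no extension $\alpha\beta \in \mathcal{L}_k(G)$, I would show that one of length at most $|V| \cdot k$ exists, obtained by contracting excursions through repeated positions in the product $V \times M$ (where $M$ is the transducer's memory set, $|M|=k$) while preserving both agreement with some $k$-transducer and the absence of a winning extension. The NP certificate then consists of a $k$-transducer $T$ (polynomial size) together with such a polynomial-length prefix $\alpha$ agreeing with $T$ and reaching a target product position $(v^{*},m^{*})$.

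To verify in polynomial time that the guessed $\alpha$ is truly dead, I would construct an auxiliary parity game $\mathcal{H}_\alpha$ on $V \times M$ that hard-codes the labels $L(m)$ and transitions $\eta(m,b)$ already committed by $\alpha$, while at the remaining \emph{uncommitted} positions grants \player{2} an extra commitment move allowing her to optimistically fix the transducer in her favor. The parity coloring of $G$ is inherited on the $V$-component. The intuition is that a $k$-transducer $T'$ compatible with $\alpha$ together with a winning \player{2} strategy in the one-player game induced by $T'$ corresponds to a \player{2} winning strategy in $\mathcal{H}_\alpha$, so $\alpha$ is dead iff \player{1} wins $\mathcal{H}_\alpha$ from $(v^{*},m^{*})$. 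Since $\mathcal{H}_\alpha$ has polynomially many positions and is a parity game, the existence of a memoryless winning strategy for \player{1} -- guaranteed by positional determinacy -- can be guessed and verified in polynomial time, giving overall NP membership of non-liveness and hence co-NP containment of liveness.

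The main obstacle is justifying the equivalence between \player{2} winning $\mathcal{H}_\alpha$ and the existence of a globally consistent $k$-transducer rescuing \player{2}, since \player{2}'s memoryless strategies in $\mathcal{H}_\alpha$ commit labels per-position rather than uniformly per-memory-state. I would handle this by an inductive uniformization argument: because \player{2} first encounters each uncommitted memory state at a well-defined moment in any play, and because her commitment at that moment propagates unchanged through all future visits, any memoryless winning strategy of \player{2} can be reshaped into one whose commitments depend only on the memory state, thereby defining a single $k$-transducer $T'$. The construction doubles as a recipe for \player{2}'s winning strategy when $G$ is live: iterate through candidate $k$-transducers, play the winning strategy for the currently hypothesised one, and switch to a new candidate whenever the environment deviates -- liveness guarantees that every such switch leaves \player{2} in a position from which a winning continuation still exists.
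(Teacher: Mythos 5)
There is a genuine gap in the verification step. Your certificate for non-liveness is a pair $(T,\alpha)$, and you must check in polynomial time that $\alpha$ is \emph{dead}, i.e.\ that no extension $\beta$ yields $\alpha\beta\in\mathcal{L}_k(G)$. That is an inner universal quantification over \emph{all} $k$-transducers compatible with $\alpha$, and compatibility constrains only the input--output behavior on prefixes of $\alpha$, not the internal state sequence. Your game $\mathcal{H}_\alpha$ hard-codes the run of the guessed $T$ on $\alpha$ and lets \player{2} complete only that partial transducer; it therefore misses rescuing transducers $T'$ whose run on $\alpha|_\Gamma$ visits a different state sequence than $T$'s (e.g.\ one that loops where $T$ advances). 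For such $T'$ a winning extension may exist even though \player{1} wins $\mathcal{H}_\alpha$, so your verifier can accept a certificate for a game that is in fact $k$-transducer live. The uniformization obstacle you flag compounds this and is not resolved by your sketch: a positional \player{2} strategy in $\mathcal{H}_\alpha$ commits choices per product position, so two positions $(v,m)$ and $(v',m)$ sharing the memory state $m$ can commit inconsistent labels $L(m)$; making commitments persist requires recording the partial transducer in the game state, which is an exponential blowup. Finally, your loop-contraction step does not preserve deadness: a shorter prefix is compatible with at least as many transducers, so contracting $\alpha$ can only make it easier to rescue.

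The paper avoids the inner quantification entirely. For each guessed $T$ it builds the product $\widetilde G$ in which \player{1} is locked to $T$, and merely checks whether some position $(v,m)$ \emph{reachable} from $\widetilde\iota$ is losing for \player{2} --- a deterministic-parity-automaton emptiness check, polynomial without any auxiliary game or uniformization. The burden is shifted to a correctness lemma (Claim~1): a losing reachable position against $T$ can always be converted into a genuinely dead prefix by \emph{extending} $\alpha$ (not contracting it) with a distinguishing experiment of length at most $k^2$ per eliminated transducer, until every $k$-transducer still compatible with the extended prefix behaves identically to $T$; since \player{1}'s moves in $\widetilde G$ are deterministic, the extended prefix still reaches a losing position. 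This ``extend until $T$ is pinned down'' argument is the key idea your proof is missing, and without it (or a substitute) the NP verifier for non-liveness is not sound.
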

\begin{proof}
We show that it suffices to check for all possible $k$-transducers separately that \player{2} can win from every reachable combination of game position and transducer state.
Hence, a co-NP algorithm can non-deterministically guess a transducer $T$ and perform this check. 
The game is $k$-transducer live if and only if
for all transducers, the answer is ``yes''. Containment in co-NP follows from this observation.

For each non-deterministically guessed $k$-transducer $T=\langle \Sigma, \Gamma, M,s, L,\eta \rangle$ and a given game $G = \langle V,\Sigma,\Gamma,E,\iota,F \rangle$, the co-NP algorithm builds a new parity game $\widetilde G = \langle \widetilde V,\Sigma,\Gamma,\widetilde E,\widetilde \iota,\widetilde F \rangle$ in which \player{1}'s moves are forced to be compatible with the transducer $T$ 
% the "i.e.," that was previously here was wrong.
and the game is played on a graph with vertex set $\widetilde V = V \times M$.

Without loss of generality, let $\top \in V$ be a vertex of \player{2} in $G$ from which \player{2} cannot lose a suffix play. If such a vertex does not exist in the graph, we just add it to $G$. % previously here but with undefined meaning: such that it does not change the result of the game.
For every $m \in M$, $u \in V_1$, $v \in V_2$, $a \in \Sigma$, and $b \in \Gamma,$ we define:
\begin{align*}
&\widetilde E((u,m),a)  = \begin{cases} (E_1(u,a),m), & \text{ if } a=L(m), \\
(\top,m) & \text{ otherwise.} \end{cases} \\
&\widetilde E((v,m),b)  = (E_2(v,b),\eta(m,b)). \\
&\widetilde \iota  = (\iota,s). \\
&\widetilde F((u,m))  = F(u). \\
&\widetilde F((v,m))  = F(v).
\end{align*}

In the game $\widetilde G$, \player{1} is restricted to play exactly the strategy induced by $T$. If \player{1} does not follow this strategy, then it loses. 
It can be tested if \player{2} has a strategy against $T$ from every position $(v,m)$.
%that is reachable from $\widetilde \iota$. 
Since \player{1}'s action from  every transducer state is fixed, the game becomes a deterministic parity automaton, for which the emptiness of the languages of the automaton's states can be determined in time polynomial in the size of the game \cite{DBLP:conf/fossacs/KingKV01}.
To complete the proof, we now prove the following two sub-claims:
\begin{description}
\item[\textbf{Claim:~1}]
\label{claim1}
If for some transducer $T$, there exists a position $(v,m)$ reachable from $\tilde \iota$ in $\widetilde G$ that is losing for \player{2}, then $G$ is not $k$-transducer live.
\item[\textbf{Claim:~2}]
\label{claim2}
If $G$ is not $k$-transducer live, then there exists a transducer $T$ such that some position $(v,m)$, reachable from $\widetilde \iota$, is losing for \player{2} in $\widetilde G$.
\end{description}

\noindent\textbf{Proof of Claim~1:}
Let $T$ be the transducer, and $(v,m)$ be a position (in $\widetilde G$) reachable under the sequence $\alpha$ from which \player{2} loses.
We construct an extension $\alpha_2$ of $\alpha$ such that all transducers $T'$ that agree with $\alpha_2$ behave identical to $T$. %after agreeing with $\alpha$.
Since $(v,m)$ is losing for \player{2} in the game $\widetilde G$, this means that $\alpha_2 \beta \notin \mathcal{L}_k(G)$ for every possible choice of $\beta$.

Let $\mathcal{T}'$ be the set of $k$-transducers that are compatible with $\alpha$. Note that this set is finite as $k$ is constant. 
Either all transducers $T' \in \mathcal{T}'$ behave identically to $T$ after reading $\alpha$ (in which case we are done), or there is at least one transducer $T' = (M',\Sigma,\Gamma,s',\eta',L')$ that does not.
If it does not, then there is a finite word $b = b_0 \ldots, b_n \in \Gamma^*$ of length at most $k^2$ such that feeding $b$ to both $T$ and $T'$ from the respective states reached after $\alpha$ forces a different a output symbol of the two transducers $T$ and $T'$. 
We now extend $\alpha$ to $\alpha' = \alpha \, \widehat{L}(\alpha|_\Gamma) b_0 \widehat{L}(\alpha|_\Gamma b_0) b_1 \ldots \widehat{L}(\alpha|_\Gamma b_0 \ldots b_{n-1}) b_n$ (recall that $\widehat{L}$ is the labeling function of the transducer extended to words). 

The position $(v',m')$ reached (in $\widetilde G$) under $\alpha'$ is still losing for \player{2}. This is because in the game $\widetilde G$, \player{1}'s actions are fixed. Thus, all the positions reachable from positions that are themselves losing for \player{2} still remain losing. However, when considering $\alpha'$ instead of $\alpha$, the set $\mathcal{T}'$ does not contain $T'$ any more. Note that since we only extended $\alpha$, no new elements can be added to $\mathcal{T}'$ in this way.
Since $\mathcal{T}'$ is finite, $\alpha'$ can be extended in this way until only transducers that behave identically to $T$ are compatible with $\alpha'$. The claim follows. \qed

\noindent\textbf{Proof of Claim~2:}
If $G$ is not $k$-transducer live, then there exists a prefix word $\alpha$ that is compatible with a $k$-transducer $T$ such that no suffix word $\beta$ exists so that $\alpha \beta$ induces a winning play for \player{2} and $\alpha \beta$ can be generated by some transducer. In particular, for all suffixes $\beta \in A^\omega$ such that $\alpha\beta$ is compatible with $T$, the corresponding play is losing for \player{2}. 

Let $\widetilde G$ be the game generated from $T$, and $(v,m)$ be the position reached in $\widetilde G$ reached under $\alpha$. Thus for all $\alpha\beta \in A^\omega$ compatible with $T$, the play $\beta$ starting from $(v,m)$ is losing for \player{2}. Moreover, any winning word ($\beta \in A^\omega$) for \player{2}  starting from $(v,m)$ in $\widetilde G$ results in a word $\alpha\beta \in \mathcal{L}_k(G)$, which contradicts the assumption on $\alpha$. This means that \player{2} loses the parity game from $(v,m)$.
\end{proof}

Adapting the above theorem to address the reachability case is simple, as the winning condition is only used for the fact that non-emptiness checking of a deterministic automaton is not harder than polynomial time. %This fact holds for reachability automata as well.
As a consequence of the above theorem, we can verify whether a given game is $k$-transducer live rather quickly. The following theorem shows that \player{2} (the system) can always win games that are $k$-transducer live. 
Additionally, we also show that such a strategy (represented as a Turing machine) requires space only polynomial in $k$ and the number of positions in the game. 
Overall, this shows that in $k$-transducer live games, the system can learn the behavior of the environment and adapt its own behavior without violating the objective encoded into the game.

\begin{theorem}
\label{thm:rep_pspace}
Any $k$- transducer live game $G$ is always winning for \player{2}. Furthermore, there exists a winning strategy for \player{2} that requires at most polynomial space ($Poly(n,k)$), where $n$ is the number of vertices in $G$. Moreover, if $G$ is a reachability game, by following this strategy, \player{2} can reach a final state in at most an exponential number of steps $(Poly(n),~EXP(k))$.
\end{theorem}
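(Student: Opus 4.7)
The plan is to exhibit a concrete winning strategy for \player{2} that enumerates $k$-transducers for \player{1} and tests them against observed behavior. Fix a total (e.g., lexicographic) ordering $T_1, \ldots, T_N$ of all $k$-transducers for \player{1}, where $N$ is at most exponential in $k$. Given the history $\alpha$ available on the input tape of a Turing machine, \player{2} (i)~determines the smallest index $i^*$ such that $T_{i^*}$ agrees with $\alpha$; (ii)~replays $\alpha$ through $T_{i^*}$ to obtain the current vertex $(v,m)$ of the derived game $\widetilde G_{T_{i^*}}$ from the proof of Theorem~\ref{thm:coNPContainment}; and (iii)~outputs the action prescribed by a memoryless winning strategy for \player{2} in $\widetilde G_{T_{i^*}}$ from $(v,m)$.

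\textbf{Winning.} As long as \player{1} plays some $k$-transducer, the index $i^*$ is well defined, because the transducer actually used by \player{1} is always consistent with $\alpha$. The contrapositive of Claim~1 in the proof of Theorem~\ref{thm:coNPContainment} states that $k$-transducer liveness of $G$ forces every position reachable from $\widetilde\iota$ in any $\widetilde G_T$ to be winning for \player{2}; since $T_{i^*}$ agrees with $\alpha$, the vertex $(v,m)$ is reachable in $\widetilde G_{T_{i^*}}$, and hence a memoryless winning strategy is available. The value of $i^*$ is monotonically non-decreasing along the play, as transducers only leave the consistent set when the history grows, so it stabilizes at some $i^\infty$. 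From that point onward, \player{1}'s outputs coincide with those of $T_{i^\infty}$, so the suffix is a legitimate play of $\widetilde G_{T_{i^\infty}}$ in which \player{2} follows a winning strategy, and hence the full play is winning for \player{2}.

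\textbf{Complexity bounds.} For the $Poly(n,k)$ space bound, storing $i^*$ uses $O(\log N) = Poly(k,\log|\Sigma|,\log|\Gamma|)$ bits, enumerating candidates and replaying $\alpha$ through each $T_i$ from the input tape uses only $O(\log k)$ working bits per candidate, and solving $\widetilde G_{T_{i^*}}$ reduces to a one-player parity game on $O(nk)$ vertices (since \player{1}'s move in $\widetilde G_{T_{i^*}}$ is forced to $L(m)$ under penalty of losing), which is solvable in polynomial time and therefore in polynomial space. For the reachability step bound, I would partition the play into maximal epochs during which $i^*$ is constant. In each epoch, \player{2} follows a memoryless reachability winning strategy in $\widetilde G_{T_{i^*}}$, so either a final state is reached within $|\widetilde V| = O(nk)$ steps, or \player{1} produces a deviation that causes $i^*$ to strictly increase. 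Since $i^*$ can strictly increase at most $N - 1$ times, the total number of steps is bounded by $N \cdot O(nk) = Poly(n) \cdot EXP(k)$.

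\textbf{Main obstacle.} The delicate step is verifying that every switch of $i^*$ preserves the invariant ``the current vertex $(v, m_{\mathrm{new}})$ in $\widetilde G_{T_{i^*_{\mathrm{new}}}}$ is winning for \player{2}.'' The key observation is that the newly selected $T_{i^*_{\mathrm{new}}}$ is, by construction, consistent with the entire history, so $(v,m_{\mathrm{new}})$ is reachable from $\widetilde\iota$ in $\widetilde G_{T_{i^*_{\mathrm{new}}}}$; Claim~1 of Theorem~\ref{thm:coNPContainment} then yields that it is winning. Carefully checking that the \player{2} moves committed under earlier transducer hypotheses never destroy this reachability---so that the invariant is genuinely preserved across every switch---is the point I expect to require the most care in the full write-up.
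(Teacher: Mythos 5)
Your proposal is correct and follows the same architecture as the paper's proof: enumerate the $k$-transducers lexicographically, play a winning strategy of the product game $\widetilde G$ for the current candidate, switch candidates upon an observed deviation, and invoke the contrapositive of Claim~1 of Theorem~\ref{thm:coNPContainment} to conclude that every position reached along the way is still winning for \player{2}; your stabilization argument for $i^*$ is in fact more explicit than the paper's. The one substantive divergence is the space accounting. You recover the current state of a newly adopted candidate $T_{i^*}$ by replaying the entire history $\alpha$ through it, which presumes $\alpha$ is available on a read-only, non-space-counted input tape. An \emph{online} controller with only $Poly(n,k)$ bits of writable memory cannot store $\alpha$, nor track all $N$ transducers' states in parallel (that costs $N\log k$ bits, exponential in $k$). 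The paper avoids replaying the history: upon adopting $T_i$ at the current vertex $v$, it initializes a candidate set $M'$ of all states $m$ with $(v,m)$ reachable in $\widetilde G$, plays a lasso-shaped winning strategy for one conjectured $m \in M'$, and eliminates $m$ on deviation, iterating over transducer/state pairs rather than transducers. Your epoch count still yields the same $Poly(n)\cdot EXP(k)$ step bound; but if the theorem is read as asserting an online implementation with polynomial working memory---the reading the paper's construction supports---you should adopt the candidate-set device.
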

\begin{proof}
We arrange the set of all possible transducers into a sequence $T_1, \ldots, T_N$ ($N \leq EXP(k)$) in lexicographic order.
% We need to show here that it's easy to compute the next element in this order, so "arbitrary" order is too imprecise.
% Note that also the transducer/state pair approach does not work as the game moves on. Hence, the (T_1,1) transducer may be conjectured while the transducer is actually in state 2, while the (T_1,2) transducer is conjectured while the transducer is in state 1, making us miss T_1 entirely.

Our strategy iterates over the transducers $T_i = (M_i,\Sigma,\Gamma,s_i,\eta_i,L_i)$ while generating a single sequence of decisions $b_0 b_1 \ldots \in \Gamma^\omega$.
For every transducer $T_i$, the strategy checks if for the current vertex $v \in V$ in the game $G$ played, the game $\widetilde G$ built according to the construction in the proof of Theorem~\ref{thm:coNPContainment} has position $(v,m)$ reachable for some $m \in M_i$. If that is not the case, this means that $T_i$ cannot be the environment strategy, and the system player strategy moves to the next transducer $T_{i+1}$ in the sequence.
Otherwise, the strategy starts to maintain a set $M'$ of transducer states that the transducer can currently be in. Initially, these are all states $m \in M_i$ for which $(v,m)$ is reachable in $\widetilde G$.

Afterwards, the strategy picks one particular state $m \in M'$ as the current conjectured environment's transducer state. It computes a winning strategy from $(v,m)$ in $\widetilde G$ in polynomial space and time, which is lasso-shaped. 
The strategy chooses the actions from this lasso while the actions of \player{1} agree with the lasso, while also simultaneously updating the candidate set $M'$ of current states.
Once the actions of \player{1} do not agree with the \player{1} actions along the lasso any more, it is known that either the environment does not play $T_i$, or the current state of $T_i$ is not $m$. State $m$ is then removed from $M'$.
If at some point $M'$ becomes empty, the strategy moves to the next transducer $T_{i+1}$.

If a game is $k$-transducer live, the system always wins - eventually, the correct transducer $T_i$ is found with the correct current state $m$, or \player{2} manages to play a winning lasso already earlier. The winning strategy for \player{2} against $(T_i,m)$ ensures that \player{2} wins when eventually, the $(T_i,m)$ combination is considered. By $k$-transducer liveness, the prefix play until then does not prevent \player{2} from winning once the correct $(T_i,m)$ pair has been found.
A corresponding strategy for reachability games works in the same way. For each $(T_i,m)$, any deviating behavior of the environment is detected in at most $O(n \cdot k)$ many steps, as this is the maximum lasso length of the \player{2} strategy computed in $\widetilde G$. 

The number of transducer/state pairs is exponential in $k$. Hence, for reachability games, a final state is visited after a number of steps at most exponential in $k$ and linear in~$n$.
\end{proof}

In Theorem~\ref{thm:coNPContainment}, we showed the upper complexity bound for detecting $k$-transducer liveness. The following theorem establishes co-NP-completeness for determining whether a B\"uchi game is $k$-transducer live.
\begin{theorem}
\label{thm:klive_hardness}
Deciding whether a B\"{u}chi game $G$  is $k$-transducer live is co-NP-hard.
\end{theorem}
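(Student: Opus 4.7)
The plan is to prove co-NP-hardness by polynomially reducing 3-SAT to the complement of $k$-transducer liveness. Given a 3-SAT instance $\phi = C_1 \wedge \dots \wedge C_m$ over $n$ variables, I would set $k = n+1$ and construct a B\"uchi game $G_\phi$ with $O(n+m)$ vertices such that $G_\phi$ is not $k$-transducer live if and only if $\phi$ is satisfiable. The game has a central \player{2} vertex $\iota$ from which \player{2} can initiate one of two sub-games: a \emph{clause verification} for some $C_j$ that forces \player{1}'s transducer to provide Boolean values for the three literal variables of $C_j$ and reaches a dedicated B\"uchi self-loop iff none of them satisfies $C_j$; and a \emph{consistency probe} for some variable $x_i$ that lets \player{2} first record \player{1}'s response to $q_i$ (stored in $O(1)$ bookkeeping vertices), then issue an arbitrary sequence of further queries $q_j$ to drive the transducer through its state space, and finally compare a second response to $q_i$ against the recorded one, reaching the B\"uchi self-loop whenever the two differ. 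Both sub-games otherwise return to $\iota$, and the B\"uchi self-loop is the only winning path for \player{2}.

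For the forward direction ($\phi$ satisfiable $\Rightarrow G_\phi$ not $k$-transducer live), given a satisfying assignment $A$ I would build an $(n+1)$-state transducer $T_A$ whose $q_i$-transitions always land in a state labeled $A(x_i)$ regardless of history. Under $T_A$ every consistency probe yields $v_1 = v_2$ and every clause verification is satisfied by $A$, so \player{2} has no winning strategy from $\iota$; the empty prefix (trivially compatible with $T_A$) thus witnesses non-liveness.

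For the reverse direction, suppose $(T, \alpha)$ witnesses non-liveness and the play reaches position $(v^*, m^*)$. Since $T$ has at most $n+1$ states, any input sequence \player{2} plays in the probe's wait phase drives $T$ into a cyclic set $C$ of states within at most $n+1$ steps, and the states within $C$ are mutually reachable by further cycle actions. A case analysis finishes the argument: if some $q_i$ yields different outputs at two cycle states, \player{2} wins by recording $v_1$ at the first and comparing with $v_2 \neq v_1$ at the second; otherwise the cyclic responses define a well-defined assignment $A^*$, and if any $C_j$ were violated by $A^*$ then the clause verification---initiated after \player{2} drives the transducer into $C$---would reach the B\"uchi self-loop. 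Hence non-liveness forces $A^* \models \phi$, so $\phi$ is satisfiable.

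The main obstacle is the reverse direction's reachability analysis on the transducer's internal state graph, together with designing the sub-games so that \player{2} can first cycle the transducer using $O(k)$ queries and then carry out both kinds of tests within the cycle, while keeping $G_\phi$ polynomial in size and soundly disallowing \player{2} from winning against $T_A$ in the forward direction. The careful labeling of the action alphabet that cleanly separates clause verification from consistency probing, without letting a malicious transducer pass both tests despite representing no consistent satisfying assignment, will form the technical core of the proof.
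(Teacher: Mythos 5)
Your high-level plan (reduce satisfiability to the complement, so that the game is $k$-transducer live iff the formula is unsatisfiable) matches the paper, but your game is genuinely different: you make \player{2} an active interrogator that issues consistency probes and clause checks, whereas the paper makes \player{2} completely passive (a single dummy action) and lets the game graph itself force \player{1} to announce an assignment once per clause, with $k$ chosen equal to the number of variables and the actions typed per variable so that the only legal behavior of a $k$-state transducer is a pure cycle of length $k$ that repeats one assignment verbatim. That design choice is what makes the paper's proof short; yours re-introduces exactly the ``probing'' difficulty the paper engineers away, and two genuine gaps result. The first is in your forward direction: non-liveness requires a prefix $\alpha \in \A^*_k$ such that \emph{no} extension $\beta$ gives $\alpha\beta \in \lang_k(G)$, and $\beta$ only has to agree with \emph{some} $k$-transducer, not with $T_A$. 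From the empty prefix, a continuation in which \player{1}'s actions follow a transducer encoding a non-satisfying (or inconsistent) assignment, with \player{2} running the corresponding failing check, reaches the B\"uchi loop; so the empty prefix is \emph{not} a witness. You would have to extend the prefix, following $T_A$'s outputs, until every $k$-transducer compatible with it is behaviorally equivalent to $T_A$ from its reached state (the finite-exclusion argument used in Claim~1 of Theorem~\ref{thm:coNPContainment}), and only then conclude non-liveness from ``no \player{2} word beats $T_A$.''

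The second gap is that the heart of your reverse direction --- that a transducer surviving every probe and every clause check must define a single well-defined satisfying assignment --- is precisely the hard part, and you explicitly defer it. With $k=n+1$ the transducer has slack in how it uses its states: its answer to $q_i$ is state-dependent, the three queries of one clause check themselves advance the state, and the two occurrences of $q_i$ in a probe need not visit mutually reachable states, since the reachable part of the transducer's state graph need not be strongly connected (the second $q_i$ may land in a bottom SCC the first never saw). Turning ``passes every test'' into ``consistent satisfying assignment'' requires an argument about bottom SCCs and about chaining sub-games through $\iota$ to navigate the transducer deliberately, none of which is supplied. The paper's construction avoids all of this: because the number of states exactly matches the number of variables and each variable has its own action pair, any transducer that does not repeat the same assignment every round makes an illegal move and loses immediately, so consistency across clauses is automatic and no probing gadget is needed.
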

\begin{proof}
Let $\psi = C_1 \land C_2 \land \dots \land C_r$ be a Boolean formula in conjunctive normal form (CNF) over the set $X = \{x_1,x_2,\dots,x_k\}$ of variables. We construct a reachability game $G_\psi$ corresponding to the formula $\psi$ such that $G_\psi$ is $k$-transducer live if and only if $\psi$ is not satisfiable. The underlying game graph $G_\psi$ is shown in Figure~\ref{fig:co_np}. 
% Due to the assumed strict alternation between the players (to make our game definition useful for capturing reactive synthesis problem), the graph has two
% In the Definition~\ref{def:bgame} of B\"uchi games, the vertices of  \player{1} and \player{2} alternate. % decided (for clarity) not to explicitly show the \player{2} vertices in Figure~\ref{fig:co_np}. %The reader should imagine a dummy \player{2} vertex $D_{(u,v)}$ in between every vertex pair$(u,v)$ (that has a transition) shown in the figure and a dummy $\epsilon$ move for \player{2} starting from $D_{(u,v)}$ that leads to $v$.
The action set for \player{1} is $\{\top_1,\bot_1,\top_2,\bot_2,\dots,\top_k,\bot_k\}$. For $b \in \{\top,\bot\}$, \player{1} uses the action $b_i$ to assign the value $b$ to the variable $x_i$. 
The action set for \player{2} is $\{\epsilon\}$, corresponding to a dummy move. Note that in this way, \player{2} has no role to play. 

\begin{figure}[p]
    %\centering
    \centering
    \begin{tikzpicture}[->,>=stealth',shorten >=1pt,auto,node distance=2cm, semithick,scale = 0.6]
    \tikzstyle{playeronenode}=[shape=rectangle,minimum width=0.8cm,minimum height=0.8cm,inner sep=-40pt,draw]
    \tikzstyle{playertwonode}=[shape=diamond,minimum width=0.9cm,minimum height=0.9cm,inner sep=-40pt,draw]
    
    %Clause 1
    \node[state, initial,playeronenode,fill = cyan] (C1S) {$\iota$};
    \node[below of = C1S] (C1S1) {};
    \node[below of = C1S1] (C1S2) {};
    \node[state, right of = C1S,playertwonode] (C1X1T) {$x_{1,1}^\bot$};
    \node[state, below of = C1X1T, anchor = south,playertwonode] (C1X1F) {$x_{1,1}^\top$};
    \node[state, playeronenode, right of = C1X1T] (C1X2T) {$\set{y}{1,1}{\bot}$};
    \node[state, playeronenode, below of = C1X2T, anchor = south] (C1X2F) {$\set{y}{1,1}{\top}$};
    \node[right of  = C1X2T] (C1E1) {$\dots$};
    \node[right of  = C1X2F, anchor = south] (C1E2) {$\dots$};
    
    %\node[below of = C1E2] (C1E3) {$\vdots$};
    
    %\node[state,right of = C1E1, fill = green,dashed]  (TC1XMT) {$\set{y}{p}{\top}$};
    \node[state,right of = C1E2, anchor = south,playeronenode] (TC1XMF) {$\set{y}{k,1}{\bot}$};
    \node[inner sep = 0pt, fit = (TC1XMF),draw,white] (C1T) {};
    
    \node[inner sep = 0.3cm,fit = (C1S)(C1T)(C1X2F),draw,loosely dashed] (C1) {};
    
    \path (C1S) edge node {} (C1X1T);
    \path  (C1S) edge node [anchor = east] {} (C1X1F);
    \path (C1X1T) edge node {$\epsilon$} (C1X2T); 
    \path  (C1X1F) edge node [anchor = north] {$\epsilon$} (C1X2F); 
   
    \node[below left] at (C1.north west) {Clause~$C_1$}; 
    
    %Clause 2
    \node[state,playeronenode, below of = C1S1,fill=cyan] (C2S) {$y_{k,1}^\top$};
    \node[below of = C2S] (C2S1) {};
    \node[below of = C2S1] (C2S2) {};
    \node[state, right of = C2S,playertwonode] (C2X1T) {$\set{x}{1,2}{\bot}$};
    \node[state, below of = C2X1T, anchor = south,playertwonode] (C2X1F) {$\set{x}{1,2}{\top}$};
    \node[state, right of = C2X1T,playeronenode] (C2X2T) {$\set{y}{1,2}{\bot}$};
    \node[state, below of = C2X2T, anchor = south,playeronenode] (C2X2F) {$\set{y}{1,2}{\top}$};
    \node[right of  = C2X2T] (C2E1) {$\dots$};
    \node[right of  = C2X2F, anchor = south] (C2E2) {$\dots$};
    \node[below of = C2E2] (C2E3) {$\vdots$};
  
    \node[state, right of = C2E2, anchor = south, playeronenode] (TC2XMF) {$\set{y}{k,2}{\bot}$};
    \node[inner sep = 0.35cm, fit = (TC2XMF),draw,white] (C2T) {};

    \node[inner sep = 0.3cm,fit = (C2S)(TC2XMF) (C2X2F),draw,loosely dashed] (C2) {};
    
    \path (C2S) edge node [anchor = north] {} (C2X1T);
    \path  (C2S) edge node [anchor = east] {} (C2X1F);
    \path (C2X1T) edge node {$\epsilon$} (C2X2T); 
    \path  (C2X1F) edge node [anchor = north] {$\epsilon$} (C2X2F);

    \node[below left] at (C2.north west) {Clause $C_2$};

    %P1 sink
    \node[playertwonode, right of = TC2XMF, fill = green] (P21) {};
    \node[playeronenode, right of = P21, fill = green] (P22) {};
    %\node[inner sep = 0.35cm,fit = (P21)(P22),draw,loosely dashed] (P2) {};
    \node[above right,yshift=12pt,xshift=-25pt] at (P22.north east) {$P_2$ paradise};
    \path (P21) edge [bend left] node {$\epsilon$} (P22);
    \path (P22) edge [bend left] node {$*$} (P21);

    %clause m
    
    \node[state,playeronenode, below of = C2S2,fill=cyan] (CMS) {$y_{k,m}^{\top}$};

    \node[state, right of = CMS,playertwonode] (CMX1T) {$\set{x}{1,m}{\bot}$};
    \node[state, below of = CMX1T, anchor = south,playertwonode] (CMX1F) {$\set{x}{1,m}{\top}$};
    \node[state, right of = CMX1T,playeronenode] (CMX2T) {$\set{y}{2,m}{\bot}$};
    \node[state, below of = CMX2T, anchor = south,playeronenode] (CMX2F) {$\set{y}{2,m}{\top}$};
    \node[right of  = CMX2T] (CME1) {$\dots$};
    \node[right of  = CMX2F, anchor = south] (CME2) {$\dots$};
   
    \node[state, right of = CME1,playeronenode]  (TCMXMT) {$\set{y}{k,m}{\bot}$};
    \node[state, below of = TCMXMT, anchor = south,playeronenode] (TCMXMF) {$\set{y}{k,m}{\top}$};
    %\node[inner sep = 0.0cm, fit = (TCMXMT)(TCMXMF),draw,dashed,white] (CMT) {};
    
    \node[inner sep = 0.3cm,fit = (CMS)(TCMXMT)(TCMXMF),draw,loosely dashed] (CM) {};
    
    %\path [bend left] (CM.south) edge node {$e$} (TCMXMT);
    
    \path (CMS) edge node [anchor = north] {} (CMX1T);
    \path  (CMS) edge node [anchor = east] {} (CMX1F);
    \path (CMX1T) edge node {$\epsilon$} (CMX2T); 
    \path  (CMX1F) edge node [anchor = north] {$\epsilon$} (CMX2F);

    \node[below left] at (CM.north west) {Clause $C_r$};
    
    %P2 sink
    \node[playeronenode, right of = TCMXMF, fill = orange] (P11) {};
    \node[playertwonode, right of = P11, fill = orange] (P12) {};
    %\node[inner sep = 0.35cm,fit = (P11)(P12),draw,loosely dashed] (P1) {};
    \path (P11) edge [bend left] node {$\epsilon$} (P12);
    \path (P12) edge [bend left] node {$*$} (P11);
    \node[below right,yshift=-12pt,xshift=-25pt] at (P12.south) {$P_1$ paradise};
    
    \path (TC1XMF) edge node {$*$} (P21);
    \path (TC2XMF) edge node {$*$} (P21);
  
    \path (TCMXMT) edge node {$*$} (P21);
    
    \path (TCMXMF) edge node {$*$} (P11);
    
    \end{tikzpicture}
    \caption{The game graph for the reachability game $G_\psi$ corresponding to $\psi$. The initial vertex $\iota$ is marked as such. \player{1} plays from squared vertices and \player{2} plays from diamond-shaped vertices. The objective of \player{1} is to reach the orange vertex in the \player{1} paradise, i.e., satisfy the formula $\psi$. Dually, the objective of \player{2} is to reach a green vertex in the \player{2} paradise, i.e., falsify the formula $\psi$ (however, there is no role for \player{2} in the game, i.e., the moves of \player{2} have no effect on the satisfaction of the clauses). Whether the current clause is already satisfied is represented in the superscript of the vertex label.
    From vertices with label $y_{i,j}^{\top}$ or $y_{i,j}^{\bot}$, \player{1} makes the assignment for the next variable $x_{i+1}$ (for clause~$j$), using one of the actions $\top_{i+1}$ and $\bot_{i+1}$. Suppose \player{1} makes the move $a \in \{\top_{i+1},\bot_{i+1}\}$ from $y_i^{\alpha}$ for clause $C_j$, then the next state is $x_{i+1,j}^{\beta}$, where $\beta = \top$, if setting $x_{i+1}$ to $b$ makes clause $C_j$ satisfied, and $\beta = \alpha$ otherwise. %The cyan vertex with the label $y_p^\top$ in some clause $j$, represents the fact the evaluation of the clause $(j-1)$ is true (under the current assignment). Hence, the next state from the cyan vertex in the clause $C_j$ depends only on the evaluation of $C_j$ under the new assignment for $x_1$ (does not take into account the $\top$ superscript of $x_p^\top$). 
   }
    \label{fig:co_np}
\end{figure}
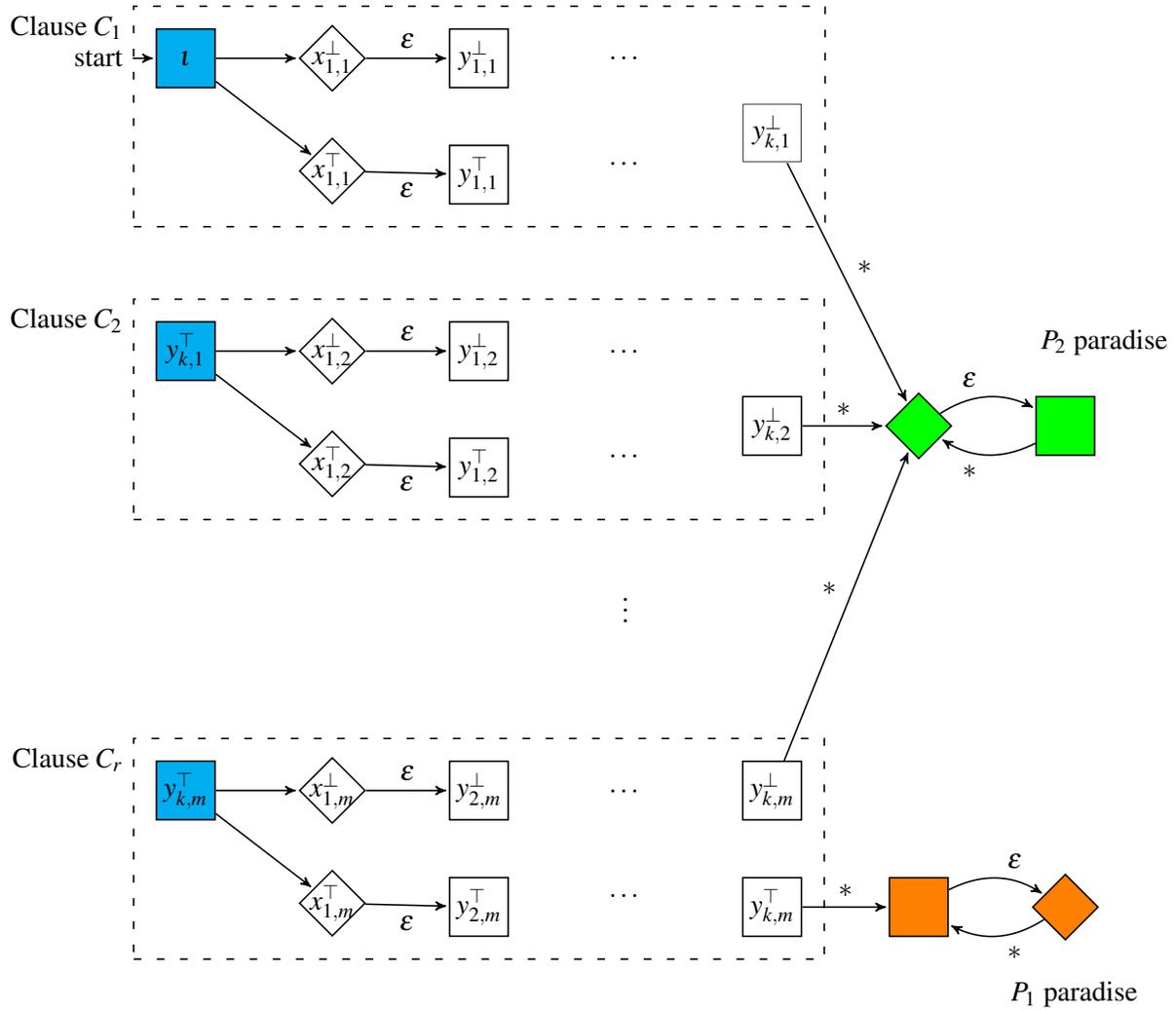
We now prove that the formula $\psi$ is satisfiable if and only if the game on $G_\psi$ is $k$-transducer live.
\player{1} is restricted to $k$-transducer strategies. Since there are $k$ variables, \player{1} needs exactly one transducer state for each of the variables. In the game, \player{1} makes an assignment for each variable, once per clause. \player{2} repeats the same dummy ($\epsilon$) move for each of its vertex. Say \player{1} is currently assigning the variable $x_{i}$ the value $b$ for clause $j$, then the evaluation of this clause, denoted by $eval_j(i)$, is computed as $eval_j(i) = eval_j(i-1) \lor \llbracket C_j\rrbracket_{x_{i} = b}$. Here, $\llbracket C_j \rrbracket_{x_{i} = b}$ is 
$\top$ if setting $x_i$ to $b$ makes clause $C_j$ satisfied, else it is $\bot$.

If clause $j$ is satisfied, then the play moves on to clause $C_{j+1}$ and \player{1} starts assigning values to the set~$X$ once again. The fact that \player{1} can only use $k$-transducer strategies implies that \player{1} cannot change the assignments.
If some clause is not satisfied, then the play moves to a green vertex and \player{2} wins. On the other hand, if all the clauses are satisfied, then \player{1} wins, i.e., the play reaches the orange vertex and can never reach a green vertex. 

Thus, if $\psi$ is satisfiable by a particular assignment for $X$, then $P_1$ uses this assignment to obtain the corresponding $k$-transducer to stay safe in $G_\psi$. Thus, $G_\psi$ is not $p$-transducer live. Otherwise, if $\psi$ is not satisfiable, there does not exist any $k$-transducer $T$ for $P_1$ such that the play generated by the strategy $f_T$ does not eventually hit a green state. This is because the assignment corresponding to $T$ cannot satisfy some clause $C_j$ and the game hits the green vertex in the first clause where this happens. %Here, \player{2} simply repeats the same dummy $\epsilon$ move for its strategy. 
This implies that $G_\psi$ is $k$-transducer live.
\end{proof}
The above proof of co-NP-hardness is similar to the proof of NP-completeness of decision problems for partial-observation games with mean-payoff objectives from \cite{Chatterjee2020} (Lemma~4). In~\cite{Chatterjee2020}, \player{1} makes the assignment to the variables and \player{2} chooses the clauses to check the assignment. We adopted this proof idea for the $k$-transducer liveness problem.
The main difference is that in our reduction, we remove the role of \player{2} and require \player{1} to repeatedly make the same assignments for each clause. The fact that \player{1} is restricted to $k$-transducers ensures that it makes the same assignments for each clause.

% =========================================

\section{Conclusion}
\label{sec:conclude}

In this paper, we studied the problem of synthesizing controllers that satisfy given specifications when used in environments of a bounded size. While this problem was originally introduced by Kupferman Et al.~\cite{boundedtemp}, our work was motivated by applications in robotics. This problem has a simple definition and yet captures the idea that a high-level robot controller should operate correctly in environments with unknown dynamics of bounded complexity. The problem is also interesting on its own because it captures the idea that a controller may observe the environment's behavior to adapt to it.

We provided two results for this synthesis problem at the level of games, as reactive synthesis is commonly reduced to solving games. 
The game formulation enables us to understand a controller to be synthesized and the environment's behavior as strategies of the two players in the game, thereby simplifying the exposition. 
Our first result is negative: we strengthened the NP lower bound given by Kupferman et al.~\cite{boundedtemp} to PSPACE. This PSPACE lower bound means that we cannot hope to employ a satisfiablity (SAT) solver
for this problem. Such solvers have proven their applicability for a plethora of practical scenarios in the last two decades, so being able to use them would have helped to scale reactive synthesis under bounded environments to scenarios of practical interest.

We identified the system player's necessity to strategize to find out how it should probe the environment's behavior ``safely'' as the key reason for this high complexity. To address this issue, we defined the notion of $k$-transducer liveness, which captures games and reactive synthesis problem instances in which such strategic reasoning is not needed. Consequently, the system player only has to care about satisfying the objective encoded into the game once the environment transducer is safely found. 
We proved that this modified problem is co-NP-complete, thereby showing that its complexity is comparably lower.
As an added benefit, our co-NP solving algorithm that is given in the proof of Theorem~\ref{thm:coNPContainment} can be implemented with a satisfiability (SAT) solver. After guessing an environment transducer, it performs an analysis of a graph that is the product of the guessed transducer and the given game. This graph a one-player game, which is conceptually the same as a deterministic automaton. Building such a product in a SAT instance is already done in exact SAT-based minimization of deterministic automata \cite{DBLP:conf/lpar/BaarirD15,DBLP:conf/sat/Ehlers10}, except that in our case, reachability of product game positions also needs to be considered. The positive results obtained for deterministic automaton minimization in the past suggest a reasonable scalability of SAT-based $k$-transducer liveness game solving and $k$-transducer liveness reactive synthesis. 

Our work focused on identifying computational complexities to prepare high-level robotics applications in unknown environments. As such, we had to exclude some practical considerations from this work, which we leave for future work.
Our approach for $k$-transducer live games
computes strategies that
run through all the possible environment transducers. This makes it slow to converge to the final behavior (for a fixed environment strategy). Finding ways of improving this approach will make solving $k$-transducer live games more useful for practical applications.
% In the following sentence, "correlate" cannot be used as that would require an existing complexity measure
Although our choice to cast  the complexity of the behavior of the environment transducer as its number of states is a natural one (in computer science), making our analysis of the problem  interesting from a theoretical viewpoint, this choice can be further  honed from a practical perspective.
One way of doing this would be to synthesize a transducer that works correctly if the environment \emph{eventually} follows a fixed transducer (where the system is not able to observe when this happens). This idea has already been applied for synthesizing implementations that are robust against deviations from environment assumptions~\cite{DBLP:conf/hybrid/EhlersT14,DBLP:conf/nfm/Ehlers11}. 
It was shown that due to the fact that the synthesis algorithms always compute finite-state implementations, these implementations have to start working towards the satisfaction of the specification even before the environment can be observed to have stabilized. 
A similar effect can be expected for synthesizing implementations that perform early best-effort specification satisfaction in environments of bounded complexity as well. Hence, analyzing the problem of reactive synthesis for environments that eventually follow a bounded transducer appears to be worthwhile.

\nocite{Neider2011Oct}
\nocite{bounedsynthesis}
\nocite{ryana_lassos}

\bibliographystyle{eptcs}

%\clearpage
\bibliography{ref}

\end{document}